







\documentclass[twocolumn]{autart}    

\usepackage{graphicx}          
\pdfminorversion=4                               
\usepackage[utf8]{inputenc}
\usepackage{amsmath}
\usepackage{amsfonts}
\usepackage{graphicx}
\usepackage{bm}

\usepackage{algorithmic}

\usepackage{algorithm}
\usepackage{afterpage}
\usepackage{wrapfig} 
\usepackage{etoolbox}

\usepackage{xcolor}
\newcommand{\colorname}{black}
\usepackage{mathtools}
\usepackage{dsfont}

\newenvironment{proof}{\paragraph*{Proof:} \vspace{-5mm}}{\hfill$\square$}
\theoremstyle{definition}
\newtheorem{remark}{Remark}[section]
\newtheorem{assumption}{Assumption}

\newtheorem{example}{Example}

\begin{document}

\begin{frontmatter}

\title{Combining Switching Mechanism with Re-Initialization and Anomaly Detection  for Resiliency of Cyber-Physical Systems}\thanks{This work is supported in part by the NSF under grant number CNS \#2039615 and in part by the NYUAD Center for Artificial Intelligence and Robotics, funded by Tamkeen under the NYUAD Research Institute Award CG010.}
\author[NYU]{Hao Fu}\ead{hf881@nyu.edu},    
\author[NYU]{Prashanth Krishnamurthy}\ead{prashanth.krishnamurthy@nyu.edu},               
\author[NYU]{Farshad Khorrami}\ead{khorrami@nyu.edu}  

\address[NYU]{Department of Electrical and Computer Engineering, Tandon School of Engineering,  \\ New York University,  5 MetroTech Center, Brooklyn, New York, 11201}

\begin{keyword}                           
  Cyber-Physical System; Redundancy; Switching Strategy; Re-Initialization; Anomaly Detection; Control System.            
\end{keyword}                             

\begin{abstract}                          
Cyber-physical systems (CPS) play a pivotal role in numerous critical real-world applications that have stringent requirements for safety. To enhance the CPS resiliency against attacks, redundancy can be integrated in real-time controller implementations by designing strategies that switch among multiple controllers. However, existing switching strategies typically overlook remediation measures for compromised controllers, opting instead to simply exclude them. Such a solution reduces the CPS redundancy since only a subset of controllers are used. To address this gap, this work proposes a multi-controller switching strategy with periodic re-initialization  to remove attacks. Controllers that finish re-initialization can be reused by the switching strategy, preserving the CPS redundancy and resiliency. The proposed switching strategy is designed to ensure that at each switching moment, a controller that has just completed re-initialization is available, minimizing the likelihood of compromise. Additionally, the controller's working period decreases with the number of involved controllers, reducing the controller's exposure time to attacks. An anomaly detector is  used to detect CPS attacks during the controller's working period. Upon alarm activation, the current control signal is set to a predefined value, and a switch to an alternative controller occurs at the earliest switching moment. Our switching strategy is shown to be still effective even if the anomaly detector fails to detect (stealthy) attacks.  The efficacy of our strategy is analyzed through three derived conditions under a proposed integrated attack-defense model for mean-square boundedness of the CPS states. Simulation results on a third-order system and a single-machine infinite-bus (SMIB) system confirm that our approach significantly bolsters CPS resiliency by leveraging the advantages of re-initialization, anomaly detection, and switching mechanisms.

\end{abstract}

\end{frontmatter}

\section{Introduction}

Cyber-Physical Systems (CPS) are complex interconnected combinations of heterogeneous hardware and software components through networks. Specifically, CPS integrate sensors, actuators, computational nodes, and physical processes at multiple levels, as illustrated in Fig.~\ref{fig:CPS}.
\begin{figure}
    \centering
    \includegraphics[width=\linewidth]{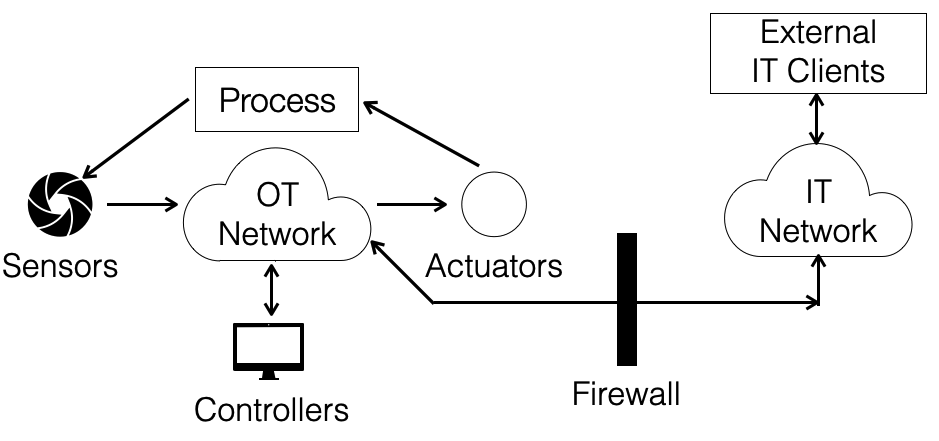}
    \caption{Structure of cyber-physical systems. OT: operational technology. IT: information technology.}
    \label{fig:CPS}
\end{figure}
CPS have been utilized in many critical real-world applications, such as power systems \cite{WK12}, industrial process control \cite{SSTC09}, and water distribution \cite{PMS10}. While advancements in programmability and remote connectivity of computational nodes have eased the operator workload, these developments also render CPS increasingly susceptible to cyber-attacks \cite{KKKPS18}. As illustrated in Fig.~\ref{fig:CPS}, CPS interface with both information technology (IT) and operational technology (OT) networks. This interconnectivity allows attackers to infiltrate the OT network by first breaching the IT network, thereby manipulating data signals. The urgency of this issue is underscored by large increases in incidents reported by the ICS Cyber Emergency Response Team over the past decade (including some high profile attacks) \cite{KKK16}. 

In this paper, we focus on scenarios where attackers have successfully infiltrated the OT network and can manipulate the control signal (we deliberately exclude IT attacks, such as credential misuse and firewall bypassing, as they fall outside the scope of this study).  Various defense mechanisms \cite{KSCKMK16} are available to increase the resiliency of CPS in spite of such threats. For instance, anomaly detectors \cite{CALHHS11,ANR16,HO18,WYPSLW21} can be deployed to identify abnormal activities indicative of an attack. In another approach, the controller can be periodically or dynamically re-initialized to neutralize any ongoing attacks \cite{ACHLMC18,AZKYS19}. Redundancy can also be implemented at different layers of the CPS, such as communication channels, software, or hardware, to serve as backups to primary subsystems \cite{KK20,KK21,ZV21}. This enables the CPS to maintain satisfactory performance levels even when some primary subsystems are compromised or fail. The Moving Target Defense (MTD) strategy \cite{ZDO14} further complicates the attacker's objectives by introducing time-variant and unpredictable elements into the system. For example, a defender could deploy multiple controllers on various computing platforms and operating systems  and could even alter these properties dynamically to maintain a constantly shifting defense posture. 

\begin{figure}
    \centering
    \includegraphics[width=0.9\linewidth]{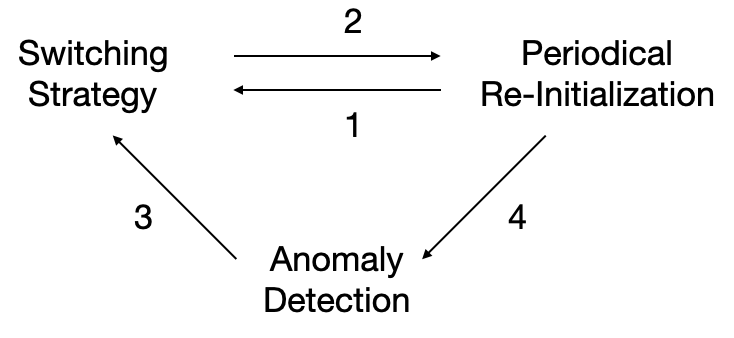}
    \caption{Illustration of how each component of our defense strategy compensates for the limitations of the others. 1: Compromised controllers are re-initialized to remove  attacks and then reintroduced into the switching rotation. 2: The   switching strategy ensures that the control law is in effect, even when the current controller undergoes re-initialization. 3: The anomaly detector alerts the switching strategy to initiate protective measures. 4: Stealthy attacks that evade the anomaly detector are removed through  re-initialization.}
    \label{fig:logic}
\end{figure}

While redundancy, re-initialization, and anomaly detection each offer valuable advantages, they come with their own set of limitations: the control law cannot be applied during the controller's re-initialization period; stealthy attacks might evade anomaly detectors; and redundancy often assumes that only a subset of subsystems will be compromised. Existing strategies \cite{KK20,KK21,ZV21} that integrate redundancy in real-time controller implementations generally switch among multiple controllers to mitigate the impact of an attack on the CPS. These approaches commonly presume that an attacker will only compromise a subset of controllers. For compromised controllers, these methods simply exclude them without remediation measures, thereby reducing the system's overall redundancy. To address this limitation, this paper proposes a novel switching mechanism that incorporates periodic re-initialization of controllers. Rather than being excluded, compromised controllers are re-initialized to remove any attacks and then reintroduced into the switching rotation, thus maintaining system resiliency. Our approach builds upon prior work, wherein controllers are deployed on various computing platforms and operating systems and configured to operate in parallel. The variability \cite{ENK11,LHBF14}—such as changing rotation keys or data locations—can be introduced after each re-initialization, further protecting the controllers from repetitive attacks. The periodic re-initialization and introduced variability allow for a more robust defense strategy, even when all controllers are subject to attack and even when stealthy attacks bypass anomaly detector-based approaches \cite{CALHHS11,ANR16,HO18,WYPSLW21}.

\begin{figure}
    \centering
    \includegraphics[width=\linewidth]{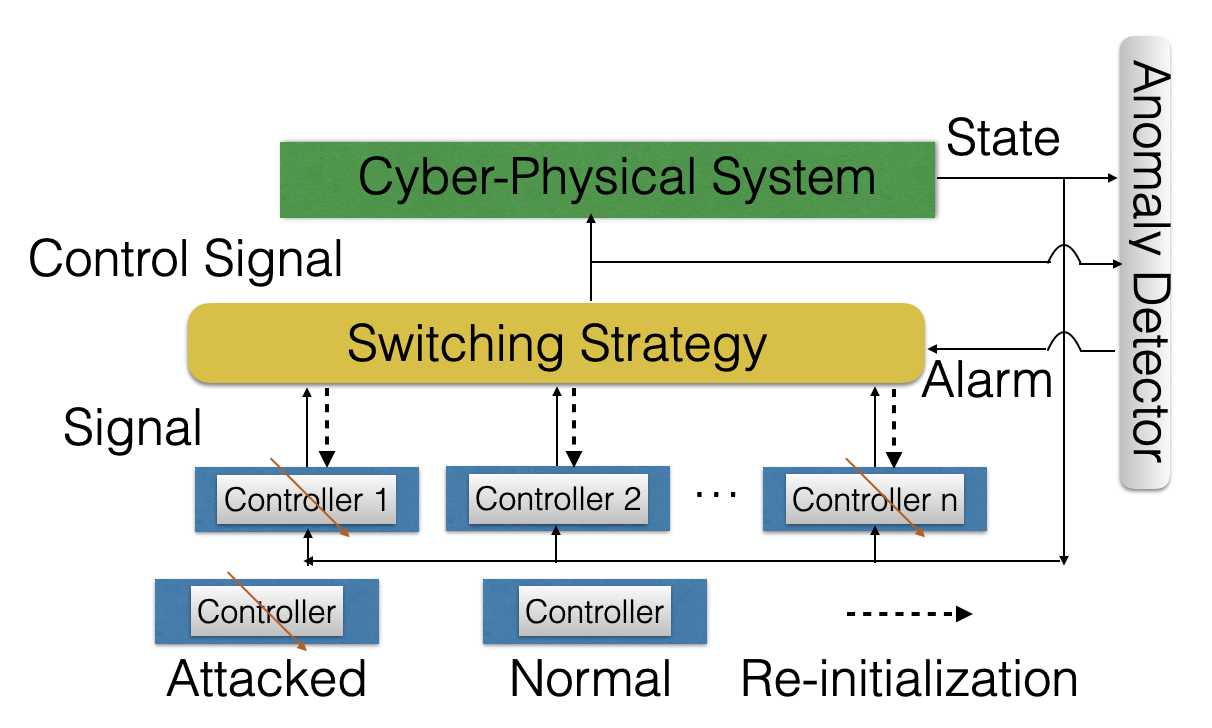}
    \caption{Our approach combines the re-initialization, anomaly detection, and switching defense for the CPS resiliency.}
    \label{fig:RRS2}
\end{figure}

Our switching strategy also mitigates the limitations inherent to the re-initialization approach \cite{ACHLMC18,AZKYS19}. The switching logic is crafted such that at each switching moment, there is always a controller that has just completed the re-initialization process, making it less likely to be compromised at that point in time. This ensures that the control law can continuously be applied to the CPS. Additionally, the period during which each controller is operational—and therefore exposed to potential CPS attacks—decreases as the number of participating controllers increases. An anomaly detector is used to monitor for signs of CPS attacks during each controller's working period. If an alarm is triggered, our switching strategy immediately sets the current control signal to a user-defined value, switches to an alternative controller at the earliest switching moment, and then initiates re-initialization of the replaced controller. Our approach effectively counters stealthy attacks that may elude the anomaly detector since any controller will be swapped out for a fresh one at the next  switching moment and subsequently re-initialized to remove any lingering threats. In other words, if the anomaly detector fails, our switching strategy becomes an effective defense approach that combines switching mechanism with re-initialization. Fig.~\ref{fig:logic} shows how each component of our defense strategy compensates for the limitations of the others, whereas Fig.~\ref{fig:RRS2} illustrates how the defense components are integrated to protect the CPS. Our approach minimizes the likelihood of a successful CPS attack by presenting multiple layers of defense.  Additionally, our approach is designed to be extensible, making it amenable to the inclusion of additional defensive layers, thereby continually improving its protective capabilities.

Our work aligns with similar assumptions and CPS dynamics as those in \cite{KK20,KK21}. For instance, all works consider nonlinear CPS dynamics, a common characteristic in real-world applications like power systems and unmanned vehicles. We also share the assumptions that the overall system is input-to-state stable with respect to external signals, and that the switching mechanism itself is secure against attacks. Additionally, we all consider that the sensor hardware and its transmission to the controller are valid. The two strategies proposed in \cite{KK20,KK21} are designed to counter persistent CPS attacks. Their first strategy  uses each controller for an equivalent duration despite of whether the controller is compromised or not. Their second strategy is  adaptively excludes compromised controllers based on the Lyapunov function values. Our work diverges by assuming all controllers are subject to attacks. Instead of excluding them, we re-initialize and reuse compromised controllers to maintain system redundancy. Additionally, our approach does not hinge on Lyapunov function values for switching. We also demonstrate that their first strategy  serves as a special case of ours. The work of \cite{ZV21} presents a data-based switching strategy for linear, controllable, and observable systems. While their approach also utilizes an anomaly detector, it lacks any form of remediation to recover and reuse compromised controllers. They rely on Lyapunov function values to make switching decisions, a dependency our model avoids. 

To  assess the effectiveness of our approach, this paper proposes an integrated attack-defense model. Sufficient conditions for mean-square boundedness of CPS states are derived under the model 1) when only re-initialization is employed, 2) when re-initialization is combined with anomaly detection, and 3) when our approach in Fig.~\ref{fig:RRS2} is utilized.  We validate our approach through simulations on a third-order system and a single-machine infinite bus (SMIB) system. These results affirm that our approach  bolsters CPS resiliency by inheriting the individual strengths of each defense component, while mitigating their respective limitations. Overall, the contribution of this work includes 1) developing a  switching strategy that is combined with re-initialization and anomaly detection, 2) proposing an integrated attack-defense model for performance analysis, and 3) deriving three mean-square boundedness conditions under the model to analyze the efficacy of our approach. The attack-defense model is proposed in Sec.~\ref{sec:model}. Our switching strategy is introduced in Sec.~\ref{sec:defense}. The conditions for the mean-square boundedness of the CPS states are derived in Sec.~\ref{sec:analysis}.    The simulation results are presented in Sec.~\ref{sec:exp}. The paper is concluded in Sec.~\ref{sec:conclusion}.

\section{CPS, Attack, and Defense Models}
\label{sec:model}

\subsection{The  CPS Dynamics and Its Assumptions}

The considered dynamic nonlinear CPS  has the form:
\begin{align}
    \dot x = f(x,u,w) 
    \label{cps}
\end{align} where $x \in \mathbb{R}^{n_x}$ is the system state, $u \in \mathbb{R}^{n_u}$ is the input, and $w \in \mathbb{R}^{n_w}$ is a disturbance. Additionally, $u$ and $w$ follow physical constraints in the specific CPS (i.e., $|u|\le u_{max}$ and $|w| \le w_{max}$, where $u_{max}$ and $w_{max}$ are two known positive constants). We assume that the CPS has a given control law and Lyapunov function such that:
\begin{assumption}
\label{assumption1}
 Applying the control law $u=u_c(x)$ to the CPS, the given Lyapunov function $V(x)$ satisfies
\begin{align}
    \dot V \le -\alpha(|x|) + \beta_1(|x|)\mu_1(|w|)
    \label{assump3}
\end{align} where $\mu_1(\cdot)$\footnote{The dot  notation, $\cdot$, represents the variable placeholder.} is a class $\mathcal{K}$ function\footnote{Class $\mathcal{K}$ denotes the set of all continuous functions $\alpha: [0,a) \to [0, \infty)$ that are strictly increasing with $\alpha(0)=0$.}, $\beta_1(\cdot)$ is a non-negative function, $\alpha(\cdot)$ is a class $\mathcal{K}_\infty$ function\footnote{Class $\mathcal{K}_\infty$ is the subset of class $\mathcal{K}$ wherein furthermore $a=\infty$ and $\alpha(r)\to \infty$ as $r\to \infty$.}. $\alpha(\cdot)$ satisfies  $\underline \alpha  V(x) \le  \alpha(|x|) \le \overline \alpha V(x)$, $\beta_1(\cdot)$ satisfies  $\beta_1^2(|x|) \le  \overline \beta_1 \alpha(|x|)$, where $\underline \alpha$, $\overline \alpha$, and $\overline{\beta}_1$ are positive constants.
\end{assumption}
\begin{assumption}
\label{assumption2}
The Lyapunov function also satisfies the following inequality with any input $u$:
\begin{align}
    \dot V \le \gamma_1(|x|) + \gamma_2(|x|)\gamma_u(|u|) + \beta_2(|x|)\mu_2(|w|)
    \label{assump4}
\end{align} where $\gamma_2(\cdot)$ and $\beta_2(\cdot)$ are non-negative functions, $\gamma_1(\cdot)$ is any function, and $\gamma_u(\cdot)$ and $\mu_2(\cdot)$ are class $\mathcal{K}$ functions. Additionally,  $\gamma_1(\cdot)$ satisfies $\gamma_1(|x|) \le \overline{\gamma}_1 V(x)$, $\gamma_2(\cdot)$ satisfies $\gamma_2^2(|x|) \le \overline{\gamma}_2 V(x)$, and $\beta_2(\cdot)$ satisfies $\beta_2^2(|x|) \le \overline{\beta}_2 V(x)$, where $\overline{\gamma}_1$ is a constant number, and $\overline{\gamma}_2$ and $\overline{\beta}_2$ are positive constant numbers.
\end{assumption}
\noindent
\begin{remark}
\label{remark1}
The underlying process dynamics of several CPS (e.g.,  power systems and unmanned vehicles) can be represented in \eqref{cps}. The strongest part of {\bf Assumption~\ref{assumption1}} is that the negative term $-\alpha(|x|)$ in \eqref{assump3} is of the “same size” (in a nonlinear function sense) as the Lyapunov function $V(x)$. {\bf Assumption~\ref{assumption2}} considers a worst-case scenario when the attacker arbitrarily modifies the control input signal to cause the maximal adversarial impact in terms of a Lyapunov inequality. {\bf Assumptions~\ref{assumption1}} and {\bf \ref{assumption2}}  are standard assumptions for nonlinear control systems, stating that a controller has been designed to render the overall system input-to-state stable with respect to exogenous signals. They ensure that the CPS \eqref{cps} will not have a finite escape time.  They are analogous to the conditions considered in \cite{KK20,KK21} and hold under various dynamic systems and control design approaches such as backstepping-based control designs for strict-feedback systems, feedback linearization, and dynamic high-gain-based control designs. The following are two examples.
\end{remark}  

\begin{example}
\label{exm:backstep}
Consider a backstepping-based control design for a general strict-feedback system with the form
\begin{align}
\nonumber
    \dot x_i &= f_i(x_1,..., x_i)+\phi_i(x_1,...,x_i)x_{i+1}, ~  i=1,...,n-1 \\
    \dot x_n &= f_n(x_1,...,x_n)+\phi_n(x_1,...,x_n)u
\end{align} where $f_1,...,f_n,\phi_1,...,\phi_n$ are certain functions. In the $i$th step with $i<n$, a virtual control law $x_{i+1}^\prime$ is designed. Define $z_{i+1} = x_{i+1} - x_{i+1}^\prime$ as the difference between $x_{i+1}$ and $x_{i+1}^\prime$. In the $n$th step, the real control law $u$ is designed. It can be seen that the corresponding Lypanov function $V$ has quadratics in $z_1,...,z_n$ and $\dot V$ automaticaly satisfies {\bf Assumptions~\ref{assumption1} and \ref{assumption2}}.
\end{example}

\begin{example}
    \label{exm:linear}
    Consider a linear system with $\dot x = Ax+Bu+Hw$ and a feedback control law $u=Kx$, where $K$ is a static control gain. Define $V=x^TPx$, where $P$ is a symmetric positive-definite matrix such that $P(A+BK)+(A+BK)^TP\le -I$ with $I$ being the identity matrix, then we have
    \begin{align}
    \nonumber
        \dot V & \le - x^Tx+2x^TPHw \le -\alpha(|x|) + \beta_1(|x|)\mu_1(|w|) \\
        \nonumber
        \dot V & \le x^T(PA+A^TP)x+2x^TPBu+2x^TPHw \\
        &\le \gamma_1(|x|) + \gamma_2(|x|)\gamma_u(|u|) + \beta_2(|x|)\mu_2(|w|)
    \end{align} for some $\alpha(\cdot)$, $\mu_1(\cdot)$, $\beta_1(\cdot)$, $\gamma_1(\cdot)$, $\gamma_2(\cdot)$, $\gamma_u(\cdot)$, $\beta_2(\cdot)$, and $\mu_2(\cdot)$ that satisfy {\bf Assumptions~\ref{assumption1} and \ref{assumption2}}. 
\end{example}

{\bf Inertia}: CPS inherently possess the characteristic of gradual state evolution,  referred to as inertia, which results in delayed reactions to external changes. Emulating this inertia property in traditional systems usually incurs additional memory and/or hardware costs \cite{AZKYS19}.

\subsection{The Adversary and Attack Model}

{\bf Goal}: The  attacker's goal is to thwart the intended behavior of the CPS by manipulating the control signal to an arbitrary value bounded by $u_{max}$, subject to the physical constraints inherent to the specific CPS.  \cite{CALHHS11} demonstrates that min/max attacks are the most effective in this context. Accordingly, we assume that the attacker sets the control signal to $u_{max}$. \textcolor{\colorname} {The overall control signal is given by
\begin{align}
    u(x,t) = \begin{cases} 
u_c(x) & \text{normal,}  \\
u_{max} & \text{under attacks}.
\end{cases}
\label{eq:attack}
\end{align} where $u_c(x)$ is the control law. Consequently, the evolution of the CPS is given by
\begin{align}
      \dot x =  \begin{cases} 
f(x,u_c(x),w)  & \text{normal,}  \\
f(x,u_{max},w)  & \text{under attacks}.
\end{cases}
\end{align} Different systems lead to various forms of the control laws and attack models. As an example, the overall system in {\bf Example~\ref{exm:linear}} under attacks is given by
\begin{align}
     \dot x =  \begin{cases} 
     (A+BK)x+Hw  & \text{normal,} \\
     Ax+Bu_{max}+Hw  & \text{under attacks.} 
\end{cases}
\end{align} }

{\bf Ability}: \textcolor{\colorname}{ The attacker could utilize various methods (e.g., software bugs, hardware faults) to infiltrate the controller nodes and initiate the manipulation of the control signals to cause dynamic impact to the CPS.} To achieve their objective, we assume the attacker has comprehensive knowledge of the CPS internals. This level of understanding enables them to exploit vulnerabilities in the CPS's software and network infrastructure. As a result, the attacker gains access to the controller node, allowing them to modify the control signal. The attacker can spoof the sensor value in the controller implementation.  However, the sensor hardware and its transmission to the controller are considered valid. Additionally, the proposed switching mechanism  is assumed to remain secure from such compromises.  \textcolor{\colorname}{The defender can implement various safeguards to attempt to shield the switching mechanism from potential attacks. This includes, for example, segregation of the network, implementation using an analog switch in conjunction with a bare-metal embedded processor, etc. As such, the attack model considered is that the switching mechanism is trustworthy. }

{\bf Attack Time}:
In our model, we take into account the time needed for the attacker to successfully exploit the CPS's software and network vulnerabilities. In a worst-case scenario, the attacker might need zero time to hijack the control flow. However, due to the CPS's inherent inertia property, the system takes time to respond to the attacker's manipulations. Further constraints might arise from limitations in the attacker's equipment or other environmental factors, making instantaneous hijacking unlikely. Moreover, if certain defense measures, such as system diversification \cite{LHBF14}, are in place, the attacker will require additional time to breach the system. Detailed examples concerning exploit time are elaborated in \cite{AZKYS19}.  We model this attack time as $t_a^\prime$. For defense strategies that are probabilistic in nature \cite{ENK11,LHBF14}, this attack time also becomes a stochastic variable. Based on these considerations, we make the following assumption:
\begin{assumption}
\label{assumption4}
The time required for the attacker to launch an attack and cause the CPS to respond, denoted as $t_a^\prime$, is a random variable.
\end{assumption}

\textcolor{\colorname}{Consider the moment that the attacker starts launching the CPS attack as $t=0$, then the control signal with time can be written as
\begin{align}
    u(x, t|~t_a^\prime) = \begin{cases} 
u_c(x) & t<=t_a^\prime,  \\
u_{max} & t>t_a^\prime.
\end{cases}
\end{align}
Similarly, the CPS dynamics can be written as
\begin{align}
\label{eq:cps_attack}
    \dot x = f(x, u(x, t|~t_a^\prime), w) = \begin{cases}
        f(x, u_c(x), w) & t\le t_a^\prime, \\
        f(x, u_{max}, w) & t>t_a^\prime.
    \end{cases}
\end{align}
The attacker's goal can be written as
\begin{align}
\label{eq:goal_vanila}
   \text{Find } ~t_a^\prime~~ \text{   s.t. } \lim_{t\to\infty}\mathbb{E}[V(t)] \to \infty
\end{align} where $V(t)$ is the given Lypanov function in Assumptions~\ref{assumption1} and \ref{assumption2} and the CPS dynamics are given in \eqref{eq:cps_attack}. 
}

\textcolor{\colorname}{
{\bf Distribution of $t_a$}: The distribution information of $t_a$ in this work is only used in the analysis of the stability of the attacked system. Nevertheless, if the defender wishes to estimate the distribution information of $t_a$, there are a few possible methods that could be applied, including 1) analyzing historical incidents, documents, and reports to estimate $t_a$’s distribution, 2) applying stochastic models (e.g., ones  as summarized in Table II of \cite{A21}), and 3) treating $t_a$ as a constant to find the minimum value  that meets the mean-square boundedness conditions dervied in the following section, then considering only attacks whose $t_a $ is less than the derived minimum value.
}

{\bf Impact of the CPS Attack under Assumption~\ref{assumption4}}: Without any resilient control strategy, the attacker's maximum impact under {\bf Assumption~\ref{assumption4}} is modeled via the Lyapunov inequality given in {\bf Assumption~\ref{assumption2}}, \textcolor{\colorname}{ which meets the attacker's goal in \eqref{eq:goal_vanila}.} A more general analysis is provided in Sec.~\ref{sec:analysis}. {\bf Assumption~\ref{assumption4}} is used only for performance analysis. Implementing our defense does not require this assumption.

{\bf Faults}: 
Attacks and faults differ in origin: attacks are intentional actions aimed at disrupting system performance, while faults are naturally occurring anomalies. Detecting attacks on a CPS is often more challenging than identifying faults, as a savvy attacker may know how to evade detection mechanisms \cite{CALHHS11}. Specifically, \textcolor{\colorname}{fault-tolerant systems do not account for stealthiness, whereas this work  considers stealthy attacks that evade detection during the stability analysis.} However, both attacks and faults can produce similar effects on a system, making them difficult to distinguish without post-event analysis.  Mathematically, our proposed threat model, along with Assumption~\ref{assumption4}, is applicable to both attacks and faults. Consequently, similar to many other works \cite{ACHLMC18,MCMK19,KK20,KK21}, we treat attacks and faults interchangeably in our analysis.  \textcolor{\colorname}{The attacker could utilize various methods (e.g., software bugs and hardware faults) to gain access to any controller nodes and initiate such manipulations of the control signals to cause dynamic impact to the CPS.} It is noted that the system disturbance  $w$ in \eqref{cps} is distinct from faults or attacks  since it affects the CPS as a whole. 

\subsection{The Defender and Attack-Defense Model}

{\bf Goal}: The defender's goal is to design a multi-controller switching strategy that enables the CPS to sustain acceptable performance levels despite the presence of bugs, failures,  faults, and attacks, provided these issues do not breach the CPS's physical constraints.

{\bf Ability}: The defender can take necessary measures to ensure that the switching mechanism  is well protected from attacks. For instance, they can employ an analog switch paired with a bare-metal embedded processor to enact the switching law with no external ports but only one-way communication. Controllers can be instantiated on different computers or operating systems. The defender possesses complete knowledge of the CPS.

{\bf Re-initialization}: 
The re-initialization defense strategy in this work periodically resets the controllers and introduces  variability after each reset to enhance data protection. By doing so, the attack window available to a potential intruder is bounded, constraining the sustained impact of any attack. This variability also makes it more challenging for an attacker to succeed.  Please refer to \cite{AZKYS19} for additional background on the benefits of re-initialization. In our approach, controllers are re-initialized at regular intervals, specifically every  $T_0$ time units, where $T_0$ is a defender-determined constant. During this re-initialization period, the control signal can be set to any value determined by the defender. For consistency, we opt for setting the control signal to zero during re-initialization, which is adopted by  \cite{ACHLMC18}. The time $t_r$ required to complete a re-initialization is hardware-dependent and considered constant. \textcolor{\colorname}{ To address potential issues (such as oscillations) that could happen during controller restarting, during which time controllers might need to re-learn operational states, one can take the potential solutions, including 1) using static instead of dynamic controllers, 2) deploying multiple controllers to ensure continuous operation, even during restarts, 3) optimizing controller implementation to enable faster reboots \cite{GSYGA04}, and 4) restoring controllers to their previous operational states using checkpoints \cite{KXWSL18}. This work considers both the first approach and a combination of the second and fourth approaches by re-initializing and recovering the compromised controllers. The third approach is an application-specific implementation detail.
}

{\bf Implementation Challenges}: To streamline our analysis, we overlook potential time delays within the OT network, for example, those occurring between sensors, control systems, and actuators. These delays are indirectly accounted for via built-in authentication times. Specifically, we designate a certain time, $t_c$, required for the CPS to authenticate a  controller.  $t_c$ is considered as a constant dependent on the CPS and the controller.

\begin{figure}[ht]
    \centering
    \includegraphics[width=\linewidth]{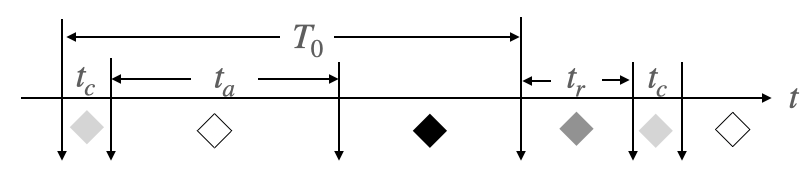}
    \caption{The sequence of events under the re-initialization defense. $t_a=\min\{t_a^\prime, T_0-t_c\}$. White: The CPS works normally. Dark: The CPS attack successfully compromises the controller. Dark gray: The controller is being re-initialized. Shallow gray: The CPS is authenticating the controller.}
    \label{fig:a-d-model}
\end{figure}

Fig.~\ref{fig:a-d-model} illustrates the sequence of events when applying the re-initialization defense strategy, where $t_a=\min\{t_a^\prime, T_0-t_c\}$  because the controller is re-initialized anyway after being active for $T_0$. From the beginning to time $t_c$, the CPS is authenticating the controller during which $u\equiv 0$ (since the control law cannot be applied).  At $t_c$, the CPS and controller successfully establish the connection. The controller outputs correct control signals $u=u_c(x)$ from $t_c$ to $t_a$. At $t_a$,  the attacker successfully compromises the controller and alters the signal to $u=u_{max}$. The attack exists $T_0-t_a-t_c$ time until a re-initialization starts. The re-initialization takes a time of $t_r$. During the re-initialization period,  $u\equiv 0$. \textcolor{\colorname}{
The overall control signal with $T_0$, $t_r$, and $t_c$ can be written as
\begin{align}
\nonumber
    u=\begin{cases}
        0 & k(T_0+t_r)\le t < k(T_0+t_r)+t_c,  \\
        u_c(x) & k(T_0+t_r)+t_c \le t < k(T_0+t_r)+t_a, \\
        u_{max} & k(T_0+t_r)+t_a \le t< k(T_0+t_r)+T_0, \\
        0 & k(T_0+t_r)+T_0 \le t< (k+1)(T_0+t_r).
    \end{cases}
\end{align} where $u=u(x, t|t_a, T_0, t_c, t_r)$ and $k=\lfloor t/(T_0+t_r) \rfloor$. Similarly, the CPS dynamics can be written as
\begin{align}
\label{eq:cps_reboot}
    \dot x &= f(x, u(x, t|~t_a, T_0, t_c, t_r), w) = \\
    \nonumber
    & \begin{cases}
        f(x, 0, w) & k(T_0+t_r)\le t < k(T_0+t_r)+t_c,  \\
       f(x, u_c(x), w)  & k(T_0+t_r)+t_c \le t < k(T_0+t_r)+t_a, \\
      f(x, u_{max}, w)   & k(T_0+t_r)+t_a \le t< k(T_0+t_r)+T_0, \\
      f(x, 0, w)   & k(T_0+t_r)+T_0 \le t< (k+1)(T_0+t_r).
    \end{cases}
\end{align}
The defender's goal under the re-initialization strategy can be written as
\begin{align}
     \text{Find } ~T_0~\&~t_r~~ \text{   s.t. } \lim_{t\to\infty}\mathbb{E}[V(t)] < \infty
\end{align} where the  system dynamics are given by \eqref{eq:cps_reboot}.
}

{\bf Anomaly Detector}: The defender could combine the re-initialization defense with anomaly detection, as shown in Fig.~\ref{fig:anomalydefense}.
\begin{figure}
    \centering
    \includegraphics[width=\linewidth]{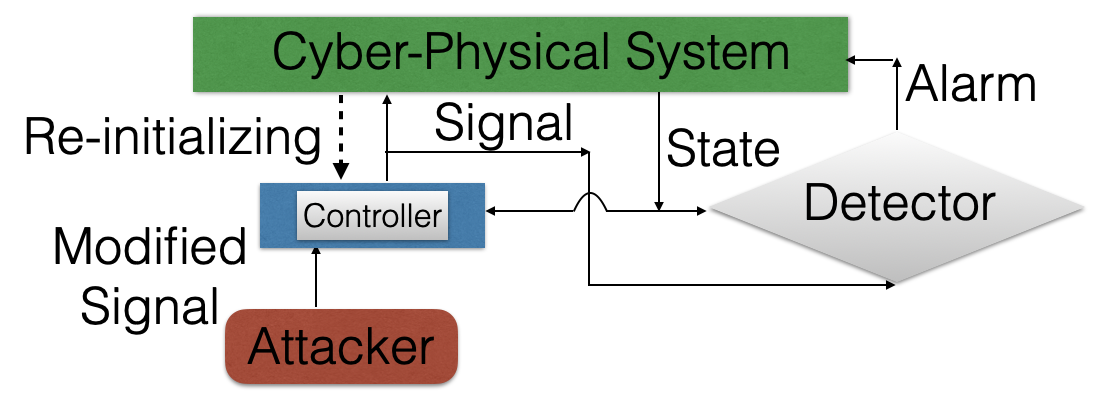}
    \caption{Combining re-initialization with anomaly detection.}
    \label{fig:anomalydefense}
\end{figure}
In this study, we adopt change detection techniques \cite{CALHHS11} that perform hypothesis testing based on historical observations of the CPS state. Specifically, the anomaly detector calculates the expected behavior of the CPS using the knowledge of CPS internals. If observed behavior deviates from the expected, the detector rejects the null hypothesis $H_0$—that the system is operating normally—in favor of the alternative hypothesis $H_1$—that an attack is occurring.  The defender predetermines the alarm threshold for hypothesis rejection, allowing for a trade-off between detection sensitivity and false alarm rate. Importantly, even if an attack goes undetected due to a high threshold, the controller will undergo re-initialization after  $T_0$ time units, mitigating the attack's impact. Though we do not explicitly consider false alarms, our approach is inherently robust against them. False alarms can be viewed as merely increasing the effective attack rate, and their impact on system performance is indistinguishable from actual attacks. Regarding detection timing, we recognize that the anomaly detector needs time $t_d^\prime$ to accumulate sufficient data to make a decision, meaning that the detection process is running in  real-time but the actual detection/alarm could happen with a delay. Given that the detection relies on statistical methods, we make the following assumption that accounts for the probabilistic nature of detection time to enhance the model's realism:
\begin{assumption}
\label{assumption3}
The time  $t_d^\prime$ required for the anomaly detector to identify a CPS attack is a random variable.
\end{assumption}

Fig.~\ref{fig:rcp} shows the defense after incorporating the anomaly detector, where $t_d=\min\{t_d^\prime, T_0-t_a-t_c\}$ represents the duration that the controller is compromised. The reason to use $t_d$ instead of $t_d^\prime$ is that the controller will be re-initialized after   $T_0$ regardless of attack status. The value of $t_d$ is determined by both $t_d^\prime$ and $t_a$. Empirical results from \cite{CALHHS11} empirically indicate that the detection time $t_d^\prime$ varies with different characteristics of attacks.  As such, it is reasonable to consider that $t_d$ could also be subject to variations depending on the specific nature of the attack.

\begin{figure}[ht]
    \centering
    \includegraphics[width=0.45\textwidth]{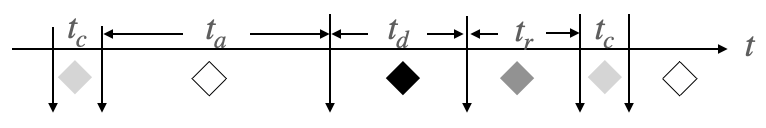}
    \caption{The sequence of events under the anomaly detector. $t_d=\min\{t_d^\prime, T_0-t_a - t_c\}$. White: The CPS works normally. Dark: The CPS attack successfully compromises the controller. Dark gray: the anomaly detector detects the attack and the controller is being re-initialized. Shallow gray: The CPS is authenticating the  controller.}
    \label{fig:rcp}
\end{figure}

\textcolor{\colorname}{
The control signal in Fig.~\ref{fig:rcp} can be written as
\begin{align}
\nonumber
    u = \begin{cases}
        0 &  T^\prime \le t < T^\prime+ t_c,  \\
        u_c(x) & T^\prime+t_c \le t < T^\prime+t_c+t_a, \\
        u_{max} & T^\prime+t_c+t_a \le t< T^\prime+t_c+t_a+t_d, \\
        0 & T^\prime+t_c+t_a+t_d \le T^\prime+t_c+t_a+t_d+t_r,
    \end{cases}
\end{align} where $u=u(x, t |~ t_a, t_d, t_c, t_r, T_0)$ and  $T^\prime$ represents the duration from $t=0$ to the completion of last re-initialization. The CPS dynamics are given as 
\begin{align}
\label{eq:cps_anomaly}
    \dot x = f(x, u(x, t |~t_a, t_d, t_c, t_r, T_0), w).
\end{align}
The defender's goal can be written as
\begin{align}
     \text{Find } ~T_0~,~t_r,~\&~t_d^\prime~~ \text{   s.t. } \lim_{t\to\infty}\mathbb{E}[V(t)] < \infty
\end{align} where the  system dynamics are given by \eqref{eq:cps_anomaly}.
}

\begin{remark}
\label{remark2}
Many works  \cite{A21,XH19,RLNW16,LHKK18,MCS13,GUCVFRTSC18} have been considering probabilistic attacks and/or detection. This work aligns with the same considerations.   When involving stochastic components, many works \cite{TS12,AESE21,HMS03,KM01} study the mean-square stability and boundedness of system states. Therefore, this paper also considers the mean-square boundedness of the CPS states, i.e., $\lim_{t\to\infty}\mathbb{E}[V(t)]<+\infty$.
\end{remark}

\section{The Proposed Switching Strategy}
\label{sec:defense}

We introduce a multi-controller switching strategy to integrate redundancy into real-time controller deployments to mitigate the impact of attacks on the CPS. The switching strategy is designed to work in conjunction with re-initialization and anomaly detection, thereby leveraging the strengths of each approach. The structure of this integrated defense mechanism is depicted in Fig.~\ref{fig:RRS2}, while its implementation is outlined in Alg.~\ref{alg:RRS2}.

\begin{algorithm} 
 \caption{Switching Strategy in Fig.~\ref{fig:RRS2}}  
 \label{alg:RRS2} 
 \begin{algorithmic} 
     \STATE Given system $\mathcal{S}$, an anomaly detector $\mathcal{N}$, $n$ controller instantiations $u_{i}(x)$ with $i=1,...,n$, re-initialization time $t_r$, authentication time $t_c$, working time $T_0 = \frac{1}{n-1}t_r$ for each controller, initial controller $i^* = 1$
     \WHILE{$\mathcal{S}$ is working}
      \FOR{t =  $0$ $\to$ $t_c$}
      \STATE Authenticating the controller $i^*$
      \ENDFOR
       \FOR{t = $t_c$ $\to$ $T_0$}
        \IF{$\mathcal{N}(\mathcal{S})$ generates alarm}
        \STATE Apply $u\equiv 0$ to the system $\mathcal{S}$
        \ELSE
        \STATE Apply the control law $u_{i^*}(x)$ to the system $\mathcal{S}$
        \ENDIF
       \ENDFOR
       
       \STATE Re-initialize controller $i^*$
       
       \STATE $i^* = (i^* \text{ mod } n) + 1 $
	\ENDWHILE
 \end{algorithmic}
 \end{algorithm}

Alg.~\ref{alg:RRS2} begins by computing the operational duration $T_0$ for each controller. This ensures that when the currently active controller reaches the end of its $T_0$ interval, a substitute controller has completed its re-initialization and is ready for a switch. The algorithm then replaces the active controller with this prepped alternative and commences re-initializing the former.   When there are  $n>1$ controllers available, $T_0$ is calculated as $\frac{t_r}{n-1}$. To intuitively understand this choice of $T_0$, consider $n=2$. When the CPS switches to Controller 2, Controller 1 starts its re-initialization process. At the end of the $T_0$ interval, Controller 2 begins re-initialization, while Controller 1 completes its own and becomes available. The active time of each controller—and hence its exposure to potential attacks—decreases as the number of controllers increases. The anomaly detector functions to identify attacks during each controller's active period. Should an alarm be triggered, Alg.~\ref{alg:RRS2} nullifies the control signal  until the next scheduled switch. The attacker can target any subset of the $n$ controllers at the same time. \textcolor{\colorname}{ The  CPS dynamics with the given strategy are
\begin{align}
\label{eq:cps_switch}
    \dot x = f(x, \mathcal{U}(t, x, \{u_i\}_{i=1}^n), w)
\end{align} where $\mathcal{U}(t, x, \{u_i\}_{i=1}^n) = u_{i^*}(x, t|~t_a, t_d, T_0, t_r, t_c)$ with $i^*=\lfloor t/T_0 \rfloor \mod n + 1$. The control signal of controller $i^*$ is given as $u_{i^*}(x, t|~t_a, t_d, T_0, t_r, t_c)=$
\begin{align}
\nonumber
  \begin{cases} 
  0 & 0 \le t < (i^*-1)T_0 \\
     0 & G(i^*, t) \le t < G(i^*, t) +t_c, \\
     u_c(x) &  G(i^*, t) +t_c \le t <G(i^*, t) +t_c+t_a \\
     u_{max} & G(i^*, t) +t_c+t_a  \le t <  G(i^*, t) +t_c+t_a +t_d \\
     0 & G(i^*, t)+t_c+t_a +t_d \le t  < G(i^*, t)+T_0+t_r
\end{cases}
\end{align} where $G(i^*, t) = (i^*-1)T_0+\lfloor (t-(i^*-1)T_0)/(T_0+t_r) \rfloor(T_0+t_r)$.
The defender's goal can be written as
\begin{align}
     \text{Find } ~n~,~t_r,~\&~t_d^\prime~~ \text{   s.t. } \lim_{t\to\infty}\mathbb{E}[V(t)] < \infty
\end{align} where the  system dynamics are given by \eqref{eq:cps_switch}.
}

\begin{figure}
    \centering
    \includegraphics[width=0.9\linewidth]{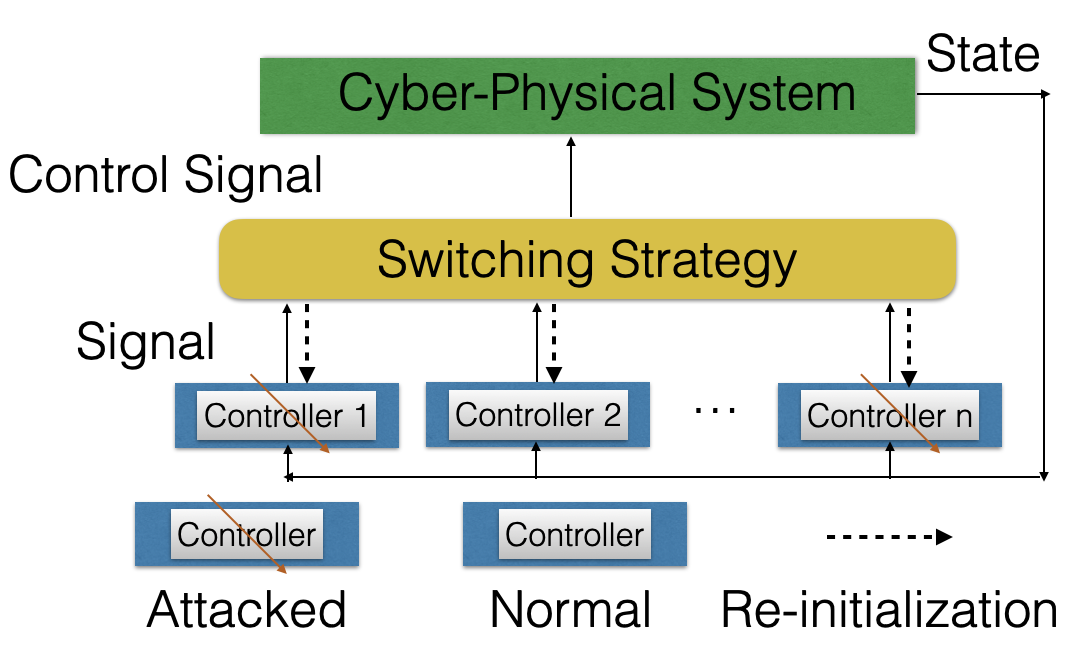}
    \caption{Alg.~\ref{alg:RRS2} without the anomaly detector.}
    \label{fig:alg2}
\end{figure}

Alg.~\ref{alg:RRS2} is designed to ensure the continuous application of the control law to the CPS, even when the current controller undergoes re-initialization. By integrating an anomaly detector, the algorithm further mitigates the adverse effects of attacks on the operational controller. Stealthy attacks that  evade the detector can still be eliminated through periodic re-initialization. Although the algorithm amalgamates the strengths of re-initialization, anomaly detection, and a controller-switching mechanism, \textcolor{\colorname}{the implementation feasibility} should be discussed. 

\textcolor{\colorname}{
{\bf Implementation Feasibility}: Re-initialization is modeled as a reset of the controller hardware/software depending on its implementation in a particular cyber-physical system. At re-initialization, the controller’s memory is considered to be wiped out and the initial good image of the controller’s software is loaded from a read-only storage to achieve an operational state. Since a re-initialization essentially resets the controller state, it is possible that there could be transients (e.g., oscillations) for a time period after re-initialization depending on the controller structure.  Static controllers would not exhibit such transients since their control signal commands are dependent only on the current CPS states. However, for dynamic controllers, setting the controller to its initial state may cause transients such as oscillations due to the loss of their previous states. Methods such as setting the controller to a recent checkpoint upon re-initialization could be applied to mitigate this issue \cite{KXWSL18}. One additional point to note is that while the read-only storage can typically be considered trustworthy in many cyber-physical system applications, the proposed approach can be applied even under the scenario where there is a possibility that the stored images are compromised (e.g., if the adversary has managed to compromise the stored image when the anomaly detector took a long time to detect the attack or even failed to detect the attack). Loading a compromised image is equivalent to the zero-time hijack with $t_a=0$. As noted in the following section, then the CPS can still be shown to be mean-square bounded if the number of such compromised controllers is less than a derived threshold.
}

\textcolor{\colorname}{
Several measures can be taken to mitigate the potential dynamic impacts/transients during re-initialization, such as using an improved anomaly detector to reduce detection delay and increase true alarm rates, using static controllers, and estimating/storing recent states of the controller for use after re-initialization. For the last measure, one could checkpoint the system states and controller states. Once an alarm is initiated, a state estimator could use these checkpoints to calculate the correct updated states for controllers. Considering the detection delay, a sliding window can be used to ensure that the checkpointed values are trustworthy. Such methods have been discussed in the literature \cite{KXWSL18,ZCKC20}. 
}

We also acknowledge that as the number of involved controllers increases, practical challenges arise in terms of their design, implementation in real-world embedded systems, and subsequent maintenance.  Additionally, the incorporation of an anomaly detector entails additional expenditures for design, installation, implementation, and maintenance. In scenarios where an anomaly detector fails to detect (stealthy) attacks or is not available due to these costs, the defense described by Fig.~\ref{fig:RRS2} becomes the one in Fig.~\ref{fig:alg2}. However, our switching strategy in this degraded mode still remains effective. Our analysis demonstrates that both switching algorithms in Fig.~\ref{fig:RRS2} and Fig.~\ref{fig:alg2} prove effective when a sufficient number $n$ of controllers are involved.

\section{Analysis of the Attacks and Defenses}
\label{sec:analysis}

\subsection{Performance of the Defense in Fig.~\ref{fig:a-d-model}}

\begin{thm}
\label{theorem2}
Under {\bf Assumptions~\ref{assumption1}$-$\ref{assumption4}}  and the scenario in Fig.~\ref{fig:a-d-model}, denote $t_a = \min\{t_a^\prime, T_0-t_c\}$, a sufficient condition for the CPS states in \eqref{cps} to be mean-square bounded is 
\begin{align}
    \mathbb{E}_{t_a}[e^{-(\lambda+\lambda_a) {t_a}+\lambda_a(T_0 +t_r)}] < 1
    \label{eq:reboot}
\end{align} where $\lambda = \underline{\alpha} - \frac{1}{\epsilon}\overline{\beta}_1\overline{\alpha}$ with $\epsilon$ being any positive constant that satisfies $\epsilon > \frac{1}{\underline{\alpha}}\overline{\beta}_1\overline{\alpha}$ and $\lambda_a = \overline{\gamma}_1 + \frac{1}{\epsilon_a}\overline{\gamma}_2+\frac{1}{\epsilon_b}\overline{\beta}_2$ with $\epsilon_a$ and $\epsilon_b$ being any positive constants.
\end{thm}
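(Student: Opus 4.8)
The plan is to reduce the two functional Lyapunov inequalities of Assumptions~\ref{assumption1} and \ref{assumption2} to a pair of scalar linear differential inequalities in $V$, propagate $V$ through one full re-initialization period via the comparison lemma, and finally pass to expectations over the random attack duration. First I would treat the normal-operation phase: starting from $\dot V \le -\alpha(|x|) + \beta_1(|x|)\mu_1(|w|)$, apply Young's inequality $\beta_1\mu_1 \le \frac{1}{\epsilon}\beta_1^2 + \frac{\epsilon}{4}\mu_1^2$, then use $\beta_1^2(|x|) \le \overline\beta_1\alpha(|x|) \le \overline\beta_1\overline\alpha V$ and $\alpha(|x|) \ge \underline\alpha V$ together with $|w| \le w_{max}$ to obtain $\dot V \le -\lambda V + c_1$, where $\lambda = \underline\alpha - \frac{1}{\epsilon}\overline\beta_1\overline\alpha$ and $c_1 = \frac{\epsilon}{4}\mu_1^2(w_{max})$; the stated requirement $\epsilon > \frac{1}{\underline\alpha}\overline\beta_1\overline\alpha$ is exactly what makes $\lambda > 0$. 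Analogously, for any input $u$ (authentication and re-initialization with $u\equiv 0$, and the attack with $u = u_{max}$), I would split the two cross terms in Assumption~\ref{assumption2} by Young's inequality with constants $\epsilon_a,\epsilon_b$, use the three quadratic-type bounds on $\gamma_1,\gamma_2,\beta_2$ and the saturation bounds $|u| \le u_{max}$, $|w| \le w_{max}$ to obtain $\dot V \le \lambda_a V + c_2$ with $\lambda_a = \overline\gamma_1 + \frac{1}{\epsilon_a}\overline\gamma_2 + \frac{1}{\epsilon_b}\overline\beta_2$ and a finite constant $c_2$.

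Next I would integrate these inequalities over the four consecutive phases of one period of length $T_0 + t_r$: authentication (duration $t_c$, growth rate $\lambda_a$), normal operation (duration $t_a$, decay rate $\lambda$), attack (duration $T_0 - t_c - t_a$, growth rate $\lambda_a$), and re-initialization (duration $t_r$, growth rate $\lambda_a$). Writing $V_k$ for the value of $V$ at the start of cycle $k$ and applying the comparison lemma phase by phase gives a recursion $V_{k+1} \le \rho(t_a) V_k + C(t_a)$, where the multiplicative factor is the product of the four phase exponentials $\rho(t_a) = e^{\lambda_a t_c}\,e^{-\lambda t_a}\,e^{\lambda_a(T_0 - t_c - t_a)}\,e^{\lambda_a t_r}$. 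Collecting the exponents, the two $\pm\lambda_a t_c$ contributions cancel and I obtain $\rho(t_a) = e^{-(\lambda+\lambda_a)t_a + \lambda_a(T_0 + t_r)}$, exactly the quantity appearing in \eqref{eq:reboot}; the additive term $C(t_a)$ is a bounded combination of $c_1$, $c_2$, and exponentials of the (bounded) phase lengths, hence uniformly bounded since $t_a \in [0, T_0 - t_c]$.

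Finally I would take expectations. Since the attack duration in each cycle is drawn independently of the state realized at the start of that cycle (which depends only on past cycles), the expectation factorizes as $\mathbb{E}[\rho(t_a) V_k] = \mathbb{E}_{t_a}[\rho(t_a)]\,\mathbb{E}[V_k]$, yielding the linear scalar recursion $\mathbb{E}[V_{k+1}] \le \bar\rho\,\mathbb{E}[V_k] + \bar C$ with $\bar\rho = \mathbb{E}_{t_a}[e^{-(\lambda+\lambda_a)t_a + \lambda_a(T_0+t_r)}]$ and $\bar C < \infty$. Under the hypothesis $\bar\rho < 1$ this contraction forces $\mathbb{E}[V_k] \to \bar C/(1-\bar\rho) < \infty$, so $\sup_k \mathbb{E}[V_k]$ is finite; bounding the intra-cycle growth by the single factor $e^{\lambda_a(T_0+t_r)}$ then extends the bound from cycle endpoints to all $t$, establishing mean-square boundedness. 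The main obstacle I anticipate is the probabilistic step rather than the calculus: one must justify the independence that splits the expectation of the product into a product of expectations (equivalently, argue the recursion on $\mathbb{E}[V_k]$ rigorously by conditioning on the cycle-start state) and confirm that $\bar C$ stays finite uniformly in $k$. The deterministic exponent bookkeeping across the four phases — in particular the clean cancellation of the $t_c$ terms — is the other place to be careful, but it is routine once the phase durations are pinned down correctly.
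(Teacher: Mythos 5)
Your proposal is correct and follows essentially the same route as the paper's proof: Young's inequality together with the quadratic-type bounds yields the scalar inequalities $\dot V \le -\lambda V + c_1$ and $\dot V \le \lambda_a V + c_2$, a phase-by-phase Bellman--Gr\"{o}nwall argument over the four segments of one period produces exactly the factor $e^{-(\lambda+\lambda_a)t_a+\lambda_a(T_0+t_r)}$ (with the same cancellation of the $t_c$ terms), and taking the expectation over $t_a$ gives condition \eqref{eq:reboot}. If anything, your handling of the last step --- the explicit contraction recursion on $\mathbb{E}[V_k]$, the conditioning argument that factorizes the expectation, and the extension of the bound from cycle endpoints to all $t$ --- is more careful than the paper's, which simply asserts that the overall evolution is a composite of such single-cycle events.
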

\begin{proof}
Without loss of generality, assume that the authentication starts at $t_0$, then at $t_0+t_c$,  the CPS follows  (\ref{assump4}) with $u\equiv 0$. After another $t_a$, from $[t_0+t_c, t_0+t_c+t_a]$, the CPS follows  (\ref{assump3}) with $u=u_c(x)$. During $[t_0+t_c+t_a, t_0+T_0]$, the CPS follows (\ref{assump4}) with changed control signal $u=u_{max}$. During $[t_0+T_0, t_0+T_0+t_r]$, the CPS follows  (\ref{assump4}) with $u\equiv 0$. Specifically, from $t_0$ to $t_0+t_c$ with $u\equiv 0$, using the inequality $ab\le (a^2+b^2)/2$, we have 
\begin{align}
    \dot V  & \le (\overline{\gamma}_1 + \frac{1}{\epsilon_a}\overline{\gamma}_2+\frac{1}{\epsilon_b}\overline{\beta}_2)V + \frac{\epsilon_b}{4}\mu_2^2(|w|).
\end{align}
Using Bellman-Gr\" onwall's inequality \cite{G19,B43}, we have
\begin{align}
    &V(t_0+t_c) \le 
     V(t_0)e^{\lambda_a t_c} + \overline{w}_0
\end{align} where $\overline{w}_0 = \frac{\epsilon_b}{4}\mu_2^2(|w|)\int_{t_0}^{t_0+t_c}e^{\lambda_a (t_0+t_c-\tau)}d\tau$.
From $t_0+t_c$ to $t_0+t_c+t_a$, we have 
\begin{align}
\label{eq:ineq}
    \dot V &\le -\alpha(|x|) + \frac{\beta_1(|x|)}{\sqrt{\epsilon/2}}\mu_1(|w|)\sqrt{\epsilon/2} \\
    & \le -\alpha(|x|) + \frac{\beta_1^2(|x|)}{\epsilon} + \frac{\epsilon}{4}\mu_1^2(|w|) \\
    &\le  -(\underline{\alpha} - \frac{1}{\epsilon}\overline{\beta}_1\overline{\alpha}) V + \frac{\epsilon}{4}\mu^2_1(|w|).
\end{align}
According to Bellman-Gr\" onwall's inequality, 
\begin{align}
    V(t_0+t_c+t_a) 
     \le V(t_0+t_c)e^{-\lambda t_a} + \overline{w}_1
\end{align} where  $\overline{w}_1 = \frac{\epsilon}{4}\mu_1^2(|w|)\int_{t_0+t_c}^{t_0+t_c+t_a} e^{-\lambda (t_0+t_c+t_a - \tau)} d\tau$.
From $t_0+t_c+t_a$ to $t_0+T_0$, similar to \eqref{eq:ineq}, we have
\begin{align}
\nonumber
    \dot V &
     \nonumber
    &\le (\overline{\gamma}_1 + \frac{1}{\epsilon_a}\overline{\gamma}_2+\frac{1}{\epsilon_b}\overline{\beta}_2)V + \frac{\epsilon_a}{4}\gamma_u^2(|u|)+\frac{\epsilon_b}{4}\mu_2^2(|w|).
\end{align}
Similarly, according to Bellman-Gr\" onwall's inequality:
\begin{align}
\nonumber
    &V(t_0+T_0) \le 
      V(t_0+t_c+t_a)e^{\lambda_a (T_0-t_a-t_c)} + \overline{u} + \overline{w}_2
\end{align} where  
$\overline{u} = \frac{\epsilon_a}{4}\gamma_u^2(|u|)  \int_{t_0+t_c+t_a}^{t_0+T_0} e^{\lambda_a (t_0+T_0-\tau)}  d\tau$ and $\overline{w}_2 =\frac{\epsilon_b}{4}\mu_2^2(|w|)  \int_{t_0+t_c+t_a}^{t_0+T_0}e^{\lambda_a (t_0+T_0-\tau)} d\tau$.
Since $u\equiv 0$ during $t_0+T_0$ to $t_0+T_0+t_r$, we have
\begin{align}
    \dot V  & \le (\overline{\gamma}_1 + \frac{1}{\epsilon_a}\overline{\gamma}_2+\frac{1}{\epsilon_b}\overline{\beta}_2)V + \frac{\epsilon_b}{4}\mu_2^2(|w|).
\end{align}
Using Bellman-Gr\" onwall's inequality, we have
\begin{align}
    &V(t_0+T_0+t_r) \le 
     V(t_0+T_0)e^{\lambda_a t_r} + \overline{w}_3
\end{align} where $\overline{w}_3 = \frac{\epsilon_b}{4}\mu_2^2(|w|)\int_{t_0+T_0}^{t_0+T_0+t_r}e^{\lambda_a (t_0+T_0+t_r-\tau)}d\tau$.
Overall, from $t_0$ to $t_0+T_0+t_r$, we have 
\begin{align}
   & V(t_0+T_0+t_r) \le
     V(t_0)e^{-\lambda t_a+\lambda_a(T_0-t_a+t_r)} + C
\end{align} where $C =\overline{w}_0e^{-\lambda t_a+\lambda_a (T_0-t_a-t_c+t_r)} + \overline{w}_1e^{\lambda_a (T_0-t_a-t_c+t_r)}$ $ + (\overline{u} + \overline{w}_2)e^{\lambda_a t_r} + \overline{w}_3$.
Let $p_a$ be the probability density function of $t_a$ whose support is $[0,T_0-t_c]$, $V$ satisfies
\begin{align}
\nonumber
    &\mathbb{E}[V(t_0+T_0+t_r)] \le 
    V(t_0)\mathbb{E}_{t_a}[e^{-(\lambda+\lambda_a) t_a+\lambda_a(T_0+t_r)}] + g(t_r)
\end{align} where $\mathbb{E}_{t_a}[e^{-(\lambda +\lambda_a) t_a+\lambda_a(T_0+t_r)}] =  e^{\lambda_a(T_0+t_r)}\int_0^{T_0-t_c}$ $   e^{-(\lambda +\lambda_a) t_a} p_a(t_a) dt_a$ and $g(t_r) = \int_0^{T_0-t_c}  p_a(t_a) C dt_a$.
The overall temporal evolution of the CPS can be modeled as a composite of many such events. To ensure the right-hand side of the above inequality to be finite, we need \eqref{eq:reboot}.  Therefore, \eqref{eq:reboot} is a sufficient condition for the mean-square boundedness of states in \eqref{cps}.
\end{proof}

\begin{remark}
    \label{rem:basic}
    When $t_a\equiv T_0-t_c$ (i.e., in the absence of an attack), \eqref{eq:reboot} simplifies  $T_0 > t_c+\frac{\lambda_a}{\lambda}(t_c+t_r)$. This implies that the CPS states can remain mean-square bounded if a single controller operates for a minimum duration of $t_c+\frac{\lambda_a}{\lambda}(t_c+t_r)$  before each re-initialization. Let $\underline t$ denote the minimum duration required for a successful attack, \eqref{eq:reboot} is satisfied if $\underline t > \frac{\lambda_a}{\lambda+\lambda_a}(T_0+t_r)$. Additionally,
    let $\overline{t}$ with $ \overline{ t}<T_0-t_c$ denote the maximum duration required for a successful attack, \eqref{eq:reboot} will not hold if $\overline{t} \le \frac{\lambda_a}{\lambda+\lambda_a}(T_0+t_r)$.   In summary, the defender can make it more challenging for the attacker to succeed by minimizing $ \frac{\lambda_a}{\lambda+\lambda_a}(T_0+t_r)$, which can be achieved by reducing the controller's operational duration $T_0$.
\end{remark}

\begin{remark}
    \label{rem:impact}
    According to the proof of {\bf Theorem~\ref{theorem2}}, we can evaluate the impact of an attack on the CPS  in the absence of any resilient control strategies,  under {\bf Assumption~\ref{assump4}}. Let $\overline{ t}$ continue to represent the maximum duration required for a successful attack, at time $t>\overline{t}$, the temporal expectation of the Lyapunov function with compromised control signal satisfies the following:
    \begin{align}
        \mathbb{E}_{t_a^\prime}[V(t)] \le C_1e^{\lambda_a t}+C_2
        \label{eq:exponentially}
    \end{align} where $C_1$ and $C_2$ are some constants. \eqref{eq:exponentially} implies that the attacker could destabilize the CPS states so that the Lyapunov function increases exponentially.  Therefore, {\bf Assumption~\ref{assump4}} is mild and physically reasonable. 
\end{remark}

\subsection{Performance of the Defense in Fig.~\ref{fig:rcp}}
 
\begin{thm}
\label{theorem1}
Under the {\bf Assumptions \ref{assumption1}$-$\ref{assumption3}} and scenario in Fig.~\ref{fig:rcp},  a sufficient condition for the CPS states in \eqref{cps} to be mean-square bounded is 
\begin{align}
    \mathbb{E}_{t_a,t_d}[e^{-\lambda t_a+\lambda_a(t_d+t_r+t_c)}] < 1.
    \label{condition}
\end{align}
where $t_a=\min\{t_a^\prime, T_0-t_c\}$, $t_d=\min\{t_d^\prime, T_0-t_a-t_c\}$,  $\lambda = \underline{\alpha} - \frac{1}{\epsilon}\overline{\beta}_1\overline{\alpha}$ with $\epsilon$ being any positive constant that satisfies $\epsilon > \frac{1}{\underline{\alpha}}\overline{\beta}_1\overline{\alpha}$, and $\lambda_a = \overline{\gamma}_1 + \frac{1}{\epsilon_a}\overline{\gamma}_2+\frac{1}{\epsilon_b}\overline{\beta}_2$ with  $\epsilon_a$ and $\epsilon_b$ being any positive constants. 
\end{thm}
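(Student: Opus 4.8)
The plan is to mirror the cycle-by-cycle argument of \textbf{Theorem~\ref{theorem2}}, adjusting only the phase durations to match the event sequence of Fig.~\ref{fig:rcp}. First I would fix the start of a cycle at $t_0=T^\prime$ (the completion time of the previous re-initialization) and partition $[t_0,\,t_0+t_c+t_a+t_d+t_r]$ into the four intervals set by the control signal: authentication of length $t_c$ with $u\equiv 0$, normal operation of length $t_a$ with $u=u_c(x)$, the attack of length $t_d$ with $u=u_{max}$, and re-initialization of length $t_r$ with $u\equiv 0$. On the three intervals where the control law is either absent or corrupted, I would invoke \textbf{Assumption~\ref{assumption2}} together with $ab\le(a^2+b^2)/2$ to get $\dot V\le \lambda_a V + c_1\gamma_u^2(|u|)+c_2\mu_2^2(|w|)$; on the normal interval I would use \textbf{Assumption~\ref{assumption1}} to get $\dot V\le -\lambda V + c_3\mu_1^2(|w|)$, exactly as in the earlier proof.

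Next I would apply Bellman--Gr\"onwall's inequality \cite{G19,B43} on each interval and chain the four resulting bounds. Collecting all input and disturbance forcing terms into one additive constant $C$, this produces the per-cycle estimate
\begin{align}
\nonumber
    V(t_0+t_c+t_a+t_d+t_r)\le V(t_0)\,e^{-\lambda t_a+\lambda_a(t_d+t_r+t_c)} + C,
\end{align}
whose growth factor combines the decay $e^{-\lambda t_a}$ from the normal interval with the expansion $e^{\lambda_a(t_c+t_d+t_r)}$ from the three remaining intervals; this is precisely the exponent in \eqref{condition}. Here $C$ is finite because $|u|\le u_{max}$ and $|w|\le w_{max}$ bound the integrated forcing, and because $t_d\le T_0-t_a-t_c$ keeps the attack interval bounded.

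The step I expect to be the main obstacle is taking the expectation over the attack parameters, since $t_a$ and $t_d$ are \emph{not} independent: by definition $t_d=\min\{t_d^\prime,\,T_0-t_a-t_c\}$ depends on the realized $t_a$. I would therefore take the expectation jointly over $(t_a,t_d)$ against its joint density, keeping the support $t_a\in[0,T_0-t_c]$, $t_d\in[0,T_0-t_a-t_c]$ consistent with both truncations, rather than factoring it into marginals as was possible in \textbf{Theorem~\ref{theorem2}}. Treating successive cycles as independent draws of $(t_a,t_d)$, so that the per-cycle growth factor is independent of $V(t_0)$, the recursion reduces to $\mathbb{E}[V_{k+1}]\le \rho\,\mathbb{E}[V_k]+\bar C$ with $\rho=\mathbb{E}_{t_a,t_d}[e^{-\lambda t_a+\lambda_a(t_d+t_r+t_c)}]$ and $\bar C$ a finite bound on $\mathbb{E}[C]$.

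Finally, iterating this affine recursion yields $\mathbb{E}[V_k]\le \rho^k\,\mathbb{E}[V_0]+\bar C/(1-\rho)$ whenever $\rho<1$, so the limit stays finite and $\lim_{t\to\infty}\mathbb{E}[V(t)]<\infty$; hence \eqref{condition} is the claimed sufficient condition. Since the per-interval estimates are identical to those in \textbf{Theorem~\ref{theorem2}}, I would keep the presentation terse on those and concentrate the exposition on the genuinely new element, namely replacing the fixed attack window $T_0-t_a-t_c$ by the random, $t_a$-dependent duration $t_d$ and carrying the resulting joint expectation.
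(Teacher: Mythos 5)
Your proposal is correct and follows exactly the route the paper intends: the paper's own proof of this theorem is simply the remark that the method mirrors \textbf{Theorem~\ref{theorem2}}, and you carry out that analogy faithfully---the four-interval decomposition with per-cycle factor $e^{-\lambda t_a+\lambda_a(t_c+t_d+t_r)}$, the Bellman--Gr\"onwall chaining, and the affine recursion in expectation. Your explicit handling of the joint expectation over the dependent pair $(t_a,t_d)$ is a detail the paper glosses over (its notation $\mathbb{E}_{t_a,t_d}$ implicitly assumes it), so if anything your write-up is slightly more careful than the original.
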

\begin{proof}
The proof method is similar to the proof of {\bf Theorem}~\ref{theorem2}.
\end{proof}

\begin{remark}
    \label{rem:worst}
    Consider the worst-case scenario where the attack evades the anomaly detector, then \eqref{condition} simplifies to \eqref{eq:reboot} since $t_d \equiv T_0-t_a-t_c$. Apart from the worst-case scenario, the inclusion of an anomaly detector refines the re-initialization defense illustrated in Fig.~\ref{fig:a-d-model}. On the other hand, consider the constant-time detection with $t_d\equiv \mu_d$ and constant-time attack with $t_a \equiv \mu_a$, then  \eqref{condition} simplifies to $\mu_d < \frac{\lambda}{\lambda_a}\mu_a - t_r - t_c$.  In this context, a detection time  shorter than $\frac{\lambda}{\lambda_a}\mu_a - t_r - t_c$ guarantees that the CPS states are mean-square bounded. To achieve this, the defender can reduce $\lambda_a$, $t_r$, $t_c$, or $\mu_d$.
\end{remark}

\subsection{Performance of  Alg.~\ref{alg:RRS2}}

To analyze the defense depicted in Fig.~\ref{fig:RRS2}, we consider the worst-case scenario where the attack successfully bypasses the anomaly detector. In such a case, Alg.~\ref{alg:RRS2} essentially reverts to the defense outlined in Fig.~\ref{fig:alg2}. We aim to demonstrate that even this pared-down defense strategy can prove effective. Furthermore, while it's conceivable for different controllers to employ various control laws, we focus on the worst-case scenario where all controllers operate using the same control law, specifically the one with the slowest rate of convergence.

\begin{thm}
\label{theoremrrs}
Under the {\bf Assumptions~\ref{assumption1}$- $\ref{assumption4}}, a sufficient condition for the CPS states in \eqref{cps} to be mean-square bounded given $n$ controllers and Alg.~\ref{alg:RRS2} is 
\begin{align}
    \mathbb{E}_{t_a}[e^{-(\lambda+\lambda_a) {t_a}+\lambda_a \frac{t_r}{n-1}}] < 1
    \label{eq:basicRRS}
\end{align} where $t_a = \min\{t_a^\prime, \frac{t_r}{n-1}-t_c\}$, $\lambda = \underline{\alpha} - \frac{1}{\epsilon}\overline{\beta}_1\overline{\alpha}$ with $\epsilon$ being any positive constant that satisfies $\epsilon > \frac{1}{\underline{\alpha}}\overline{\beta}_1\overline{\alpha}$, and $\lambda_a = \overline{\gamma}_1 + \frac{1}{\epsilon_a}\overline{\gamma}_2+\frac{1}{\epsilon_b}\overline{\beta}_2$ with $\epsilon_a$ and $\epsilon_b$ being any positive constants.
\end{thm}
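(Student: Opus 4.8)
The plan is to reduce the whole-system analysis to a single controller's active window and then reuse, almost verbatim, the Lyapunov machinery of {\bf Theorem~\ref{theorem2}}, while exploiting the one structural difference that switching removes the system-wide re-initialization dead time. First I would establish the scheduling invariant implied by the choice $T_0=\frac{t_r}{n-1}$, i.e. $t_r=(n-1)T_0$. In steady state controller $i$ is active on $[(i-1)T_0+knT_0,\, iT_0+knT_0]$ and then re-initializes for exactly $t_r=(n-1)T_0$, finishing at $(i-1)T_0+(k+1)nT_0$, which is precisely when it is next scheduled to serve. Hence a freshly re-initialized controller is available at every switching instant, and -- unlike the single-controller setting of {\bf Theorem~\ref{theorem2}} -- the closed loop never sits in an uncontrolled restart phase: the $t_r$ re-initialization of a retired controller runs in parallel with the active service of its successors. (The initial staggered window $0\le t<(i^*-1)T_0$ with $u\equiv0$ affects only a transient and is absorbed into a bounded additive constant.)

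Next I would analyze the CPS over one active window $[kT_0,(k+1)T_0]$ under the worst case stipulated before the theorem (the detector is bypassed so no alarm fires, and all controllers use the slowest-converging law). This window splits into three phases identical in form to the proof of {\bf Theorem~\ref{theorem2}}: authentication on $[kT_0,kT_0+t_c]$ with $u\equiv0$ governed by {\bf Assumption~\ref{assumption2}}; nominal operation on $[kT_0+t_c,kT_0+t_c+t_a]$ with $u=u_c(x)$ governed by {\bf Assumption~\ref{assumption1}}; and attack on $[kT_0+t_c+t_a,(k+1)T_0]$ with $u=u_{max}$ again governed by {\bf Assumption~\ref{assumption2}}, where $t_a=\min\{t_a',T_0-t_c\}$. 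Applying the bound $ab\le(a^2+b^2)/2$ together with the Bellman--Gr\"onwall inequality phase-by-phase yields growth/decay factors $e^{\lambda_a t_c}$, $e^{-\lambda t_a}$, and $e^{\lambda_a(T_0-t_c-t_a)}$, which multiply to the one-step estimate
\[
V((k+1)T_0)\le V(kT_0)\,e^{-\lambda t_a+\lambda_a(T_0-t_a)}+C,
\]
with $C$ a finite constant collecting the disturbance and saturated-input terms (bounded since $|w|\le w_{max}$ and $|u|\le u_{max}$). Note the exponent is $-(\lambda+\lambda_a)t_a+\lambda_a T_0$, with no appended $\lambda_a t_r$ term precisely because of the first step.

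Finally I would take the expectation over $t_a$ against its density on $[0,T_0-t_c]$ to get $\mathbb{E}[V((k+1)T_0)]\le \mathbb{E}_{t_a}[e^{-(\lambda+\lambda_a)t_a+\lambda_a T_0}]\,V(kT_0)+g$, then iterate this geometric recursion: $\mathbb{E}[V]$ remains bounded as $k\to\infty$ exactly when the multiplier is strictly below one. Substituting $T_0=\frac{t_r}{n-1}$ delivers condition~\eqref{eq:basicRRS}. The main obstacle, and the only genuinely new ingredient relative to {\bf Theorem~\ref{theorem2}}, is the first step: justifying that the relevant recurrence window is exactly $T_0$ rather than $T_0+t_r$. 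This requires verifying the scheduling invariant (a clean controller is ready at each switch) and arguing that re-initialization wipes any prior compromise, so every active window resets to the same three-phase structure regardless of which controller serves it or which subset the attacker targets. This is exactly what replaces the $\lambda_a(T_0+t_r)$ exponent of \eqref{eq:reboot} by $\lambda_a T_0$, and therefore makes increasing $n$ (equivalently shrinking $T_0$) the lever that enforces the bound.
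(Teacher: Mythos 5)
Your proposal is correct and follows essentially the same route as the paper's proof, which likewise decomposes each active window into the authentication/nominal/attack phases governed by Assumptions~\ref{assumption2}, \ref{assumption1}, and \ref{assumption2} respectively and then invokes the Bellman--Gr\"onwall machinery of {\bf Theorem~\ref{theorem2}} with the re-initialization dead time removed. In fact you supply more detail than the paper does---the explicit verification of the scheduling invariant (a freshly re-initialized controller is ready at each switch, so $t_r$ runs in parallel and the exponent $\lambda_a(T_0+t_r)$ collapses to $\lambda_a T_0=\lambda_a\frac{t_r}{n-1}$) is exactly the step the paper compresses into ``Alg.~\ref{alg:RRS2} ensures that the CPS will not use controllers that are in re-initialization.''
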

\begin{proof}
\label{proofrrs}
 Alg.~\ref{alg:RRS2} ensures that the CPS will not use controllers that are in re-initialization. Therefore, from $[t_0, t_0+t_c]$ the CPS follows \eqref{assump4} with $u\equiv0$. From $[t_0+t_c,t_0+t_c+t_a]$, the CPS follows  (\ref{assump3}) with $u=u_c(x)$. During $[t_0+t_c+t_a, t_0+\frac{t_r}{n-1}]$, the CPS follows (\ref{assump4}) with changed control signal $u=u_{max}$.   Using a similar method in {\bf Theorem~\ref{theorem2}}, we get a similar condition for the mean-square boundedness of the CPS states, i.e.,
\begin{align}
  \mathbb{E}_{t_a}[e^{-(\lambda+\lambda_a) t_a + \lambda_a \frac{t_r}{n-1} }] \le 1. 
    \label{eq:rrs}
\end{align}
\end{proof}
\newcommand{\floor}[1]{\left\lfloor #1 \right\rfloor}

\begin{remark}
    \label{rm:original}
    In the case where the stealthy attack is instantaneous (i.e., $t_a=0$), affects fewer than $n$ controllers (i.e., $m<n$ controllers are compromised), and cannot be removed by re-initialization, the condition \eqref{eq:basicRRS} simplifies to $m < \frac{\lambda}{\lambda+\lambda_a}n$. Under this condition, the maximum number of compromised controllers is limited to $\floor{\frac{\lambda}{\lambda+\lambda_a}n}$. If $\lambda = \lambda_a$, then no more than half of the total controllers can be compromised. This finding aligns with previous results  in \cite{KK20,KK21}. The upper bound for the number $n$ of involved controllers is influenced by the authentication time $t_c$ and re-initialization period $t_r$. Specifically, since we require $t_a > 0$, it implies $\frac{t_r}{n-1}>t_c$ and $n < 1+\frac{t_r}{t_c}$. If $n$ exceeds this threshold, the CPS will be preoccupied with the authentication process for various controllers, failing to actually engage them in controlling the system.
\end{remark}

{\bf Theorems~\ref{theorem2}$-$\ref{theoremrrs}} are sufficient conditions, meaning that even if these conditions are not met, the CPS states could still potentially remain mean-square bounded. \textcolor{\colorname}{ This work only considers sufficient conditions due to the high complexity of the overall system. Specifically, the considered CPS includes stochastic and discrete-event components in addition to the underlying continuous-time system dynamics. Hence, while necessary and sufficient conditions for stability would be nice to have, obtaining such conditions is highly difficult/infeasible. Indeed, even for simpler scenarios such as design of stabilizing controllers for classes of nonlinear systems, it is often the case that only sufficient conditions can be obtained for conditions for such control designs.  Nevertheless, from a practical standpoint, the obtained sufficient conditions are valuable since they indicate conditions under which the proposed approach is applicable. Furthermore, given that our paper focuses on developing a redundancy-based scheme for robustness of CPS to attacks, sufficient conditions are most important since they indicate (albeit possibly conservative) conditions under which the proposed approach provides the robustness benefits. 
}

\subsection{\textcolor{\colorname}{Unbounded Re-Initialization Period}
}
\textcolor{\colorname}{
When the re-initialization period $t_r$ or the authentication time $t_c$ are bounded random variables, {\bf Theorems~\ref{theorem2}$-$\ref{theoremrrs}} still apply by using the maximum restarting or authenticating time.  The following theorem derives the the number of normal controllers required for the CPS to be mean-square bounded when the re-initialization period $t_r$ is unbounded. }

\textcolor{\colorname}{
\begin{thm}
\label{thm:requirement}
Under the {\bf Assumptions~\ref{assumption1}$- $\ref{assumption4}}, given Alg.~\ref{alg:RRS2} and $n$ controllers with $n_1$ of them having finite re-initialization time $t_r$  and $n-n_1$ of them having infinite re-initialization time (i.e., $t_r=\infty$), a sufficient condition for the CPS states to be mean-square bounded is 
\begin{align}
\label{thm:minimum}
    n_1 > \frac{ \lambda_a\frac{nt_r}{n-1}}{\frac{t_r}{n-1} \lambda_a - \text{ln } \mathbb{E}_{t_a}[e^{-(\lambda+\lambda_a) {t_a}+\lambda_a \frac{t_r}{n-1}}]}
\end{align} where $\lambda$ and $\lambda_a$ have the same meaning as in {\bf Theorem~\ref{theoremrrs}}. Moreover, when $t_a=\frac{t_r}{n-1}$, we have
\begin{align}
\label{thm:simple}
    n_1 > \frac{\lambda_a}{\lambda+\lambda_a}n.
\end{align}
\end{thm}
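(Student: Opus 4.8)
The plan is to reduce this theorem to Theorem~\ref{theoremrrs} by analyzing one full rotation of Alg.~\ref{alg:RRS2} (all $n$ controller indices, of total duration $n T_0 = \frac{n t_r}{n-1}$ with $T_0 = \frac{t_r}{n-1}$) and tracking the multiplicative effect on $V$. In steady state, a controller with infinite re-initialization time is used once and then stays in re-initialization forever, so whenever the rotation reaches its index the controller is unavailable and $u\equiv 0$ is applied. Hence each cycle decomposes into $n_1$ ``good'' slots (finite-$t_r$ controllers that authenticate, apply $u_c$, and may then be attacked) and $n-n_1$ ``bad'' slots (perpetually re-initializing controllers with $u\equiv0$ throughout). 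The transient first cycle contributes only a bounded amount and does not affect the asymptotic behavior.

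First I would import the per-slot factor already derived inside the proof of Theorem~\ref{theoremrrs}: over a single good slot $[t_0,t_0+T_0]$, chaining the Bellman--Gr\"{o}nwall bounds for the phases $u\equiv0$ on $[t_0,t_0+t_c]$ (growth rate $\lambda_a$), $u=u_c$ on $[t_0+t_c,t_0+t_c+t_a]$ (decay rate $\lambda$), and $u=u_{max}$ on $[t_0+t_c+t_a,t_0+T_0]$ (growth rate $\lambda_a$) yields $V(t_0+T_0)\le \rho_a V(t_0)+c_a$ with multiplicative factor $\rho_a = e^{-(\lambda+\lambda_a)t_a+\lambda_a T_0}$ and a bounded additive disturbance term $c_a$. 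For a bad slot, the single phase $u\equiv0$ on all of $[t_0,t_0+T_0]$ gives, by the same inequality \eqref{assump4}, $V(t_0+T_0)\le e^{\lambda_a T_0}V(t_0)+c_0$ with bounded $c_0$; note that the $u\equiv0$ and $u=u_{max}$ cases both use \eqref{assump4}, so the bad-slot growth rate is $\lambda_a$ regardless of the precise interpretation of the stuck controller.

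Next I would compose the $n$ slots within a cycle. Taking $t_a$ independent across the good controllers, the expected per-cycle multiplicative factor factorizes as $\big(\mathbb{E}_{t_a}[e^{-(\lambda+\lambda_a)t_a+\lambda_a T_0}]\big)^{n_1}\big(e^{\lambda_a T_0}\big)^{n-n_1}$, while the accumulated additive terms stay bounded. As in Theorem~\ref{theorem2}, modeling the evolution as a composite of such cycles shows the states are mean-square bounded whenever this factor is strictly below one, i.e. $n_1\ln\rho + (n-n_1)\lambda_a T_0 < 0$ with $\rho = \mathbb{E}_{t_a}[e^{-(\lambda+\lambda_a)t_a+\lambda_a T_0}]$. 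Since $\ln\rho - \lambda_a T_0 = \ln\mathbb{E}_{t_a}[e^{-(\lambda+\lambda_a)t_a}]\le 0$ (strictly negative whenever $t_a$ is not a.s. zero and $\lambda+\lambda_a>0$), I can solve for $n_1$, dividing by the negative quantity $\ln\rho-\lambda_a T_0$ and reversing the inequality, to obtain exactly \eqref{thm:minimum} after substituting $T_0 = \frac{t_r}{n-1}$. Specializing to $t_a=T_0=\frac{t_r}{n-1}$ gives $\rho = e^{-\lambda T_0}$, so $\ln\rho = -\lambda T_0$ and the bound collapses to \eqref{thm:simple}. The main obstacle I anticipate is the bookkeeping justifying the factorization of the per-cycle expectation (independence of attack times across distinct controllers) together with the geometric-series argument that keeps the accumulated additive disturbance terms bounded once the per-cycle multiplicative factor is below one; the algebra isolating $n_1$ is then routine.
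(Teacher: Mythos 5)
Your proposal is correct and takes essentially the same route as the paper's own (much terser) proof: the paper likewise treats a full rotation of the $n$ controllers as $n_1$ good slots with factor $\mathbb{E}_{t_a}[e^{-(\lambda+\lambda_a)t_a+\lambda_a \frac{t_r}{n-1}}]$ and $n-n_1$ permanently re-initializing slots with factor $e^{\lambda_a \frac{t_r}{n-1}}$ (control signal zero once reused), imposes that the product of these factors be below one, and solves for $n_1$ to get \eqref{thm:minimum} and, with $t_a=\frac{t_r}{n-1}$, \eqref{thm:simple}. The only difference is that you make explicit several steps the paper leaves implicit, namely the per-slot Bellman--Gr\"{o}nwall factors imported from Theorem~\ref{theoremrrs}, the independence of attack times needed to factorize the expectation, the boundedness of the accumulated additive disturbance terms, and the sign condition $\ln\rho-\lambda_a T_0<0$ required to divide and reverse the inequality.
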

\begin{proof}
Alg.~\ref{alg:RRS2} ensures that each controller is used only for $T_0$ time and then will be re-initialized. For controllers with infinite re-initialization time, the control signal will be $0$ once it is reused by Alg.~\ref{alg:RRS2}. Therefore, to ensure the Lyapunov function to be mean-square bounded after iterating all $n$ controllers, we must have
  \begin{align}
        (\mathbb{E}_{t_a}[e^{-(\lambda+\lambda_a) {t_a}+\lambda_a \frac{t_r}{n-1}}])^{n_1} (e^{\lambda_a \frac{t_r}{n-1}})^{n-n_1}<1.
  \end{align} This implies \eqref{thm:minimum}. Furthermore, when $t_a=\frac{t_r}{n-1}$ is a constant, \eqref{thm:minimum} becomes \eqref{thm:simple}.
\end{proof}
\begin{remark}
    {\bf Theorem~\ref{thm:requirement}} implies that the number of controllers with infinite re-initialization time must be limited to ensure the mean-square boundedness of the CPS states.  If $\lambda = \lambda_a$, then no more than half of the total controllers can have infinite re-initialization time.   {\bf Theorem~\ref{thm:requirement}} also applies for the scenario where $n_1$ out of $n$ controllers have finite authentication time $t_c$ and $n-n_1$ of them have infinite authentication time. 
\end{remark}
}

\section{Simulation Studies}
\label{sec:exp}

This section provides empirical illustrations for {\bf Theorems~\ref{theorem2}$-$\ref{theoremrrs}} and Alg.~\ref{alg:RRS2}, using two systems cited from \cite{KK21}.  Each experiment is conducted ten times, and the average results are plotted.

\subsection{Example of A Third-Order System}

Consider the  system \cite{KK21}
\begin{align}
\nonumber
    \dot x_1  &= x_2; \qquad
    \dot x_2  = -x_2 + \text{sin}(0.1x_1)w + x_3;  \\
    \dot x_3  &= -x_3 + u
    \label{sys1}
\end{align} where $x=[x_1, x_2, x_3]^T$ is the system state with initial value $x_0 = [5,2,2]^T$, $u$ is the control input, and $w=\text{sin}(0.2t)$ is the disturbance. Define the controller as $u = Kx$, where $K = [-27, -19, -7]$ with the three desired pole locations at $-3$. We set the re-initialization time $t_r=1$s, the authentication time $t_c=0.01$s, and the physical constraint $u_{max} = 10$. A controller outputs zero when it is in re-initialization or authentication. The attacked controllers will always output the maximum value (i.e., $u\equiv 10$).  We assume that the attack and detection distributions are both truncated Gaussians and use four parameters $(a,b,\mu,\sigma)$ to depict them, i.e., $p(t)\sim \frac{1}{\sigma} \frac{\phi(\frac{t-\mu}{\sigma})}{\Phi(\frac{b-\mu}{\sigma}) - \Phi(\frac{a-\mu}{\sigma})}$, where $\phi(x) = \frac{1}{\sqrt{2\pi}} exp(-\frac{1}{2}x^2)$ and $\Phi(x) = \frac{1}{2}\left(1+erf(\frac{x}{\sqrt{2}})\right)$\footnote{$exp(x)=e^x$ and $erf(x)=\frac{2}{\sqrt{\pi}}\int_0^xe^{-t^2}dt$.}.  The re-initialization process is to simply turn off and on the controller since it is static. We allow the attacker to target any subset of controllers simultaneously.

\textbf{Calculating the Parameters:}
Assume the Lyapunov function is $V = x^T P x$.  Let $B = [0,0,1]^T$, $f(x,w) = [0, \text{sin}(0.1x_1)w, 0]^T$, $A = [[0,0,0]^T, [1,-1,0]^T, [0,1,-1]^T]$ and $A_c = [[0,0,-27]^T$, $[1,-1,-19]^T, [0,1,-8]^T]$.
Then we can write  system \eqref{sys1} as $\dot x = Ax+Bu+f$ and get $\dot V  = x^T(A^TP+PA)x+2x^TPBu+2x^TPf$.
If we apply the control law $u = Kx$, then system \eqref{sys1} becomes $\dot x = A_cx + f$ and we have $\dot V =  x^T(A_c^TP+PA_c)x + 2x^TPf$.
Let $I_3$ be the $3\times 3$ identity matrix. Solving the equation $A_c^TP+PA_c = -I_3$ to get $P$ so that  $ \dot V  = -x^T I_3 x + 2x^TPf$.
We can let $\alpha(|x|) = x^Tx$, $\beta_1(|x|)=2\sqrt{x^Tx}$, and $\mu_1(|w|) = |w|$. Similarly, we can let $\gamma_1(|x|)=4x^Tx$, $\gamma_2(|x|)=0.5\sqrt{x^Tx}$, $\beta_2(|x|)=2\sqrt{x^Tx}$, $\gamma_u(|u|)=|u|$, and $\mu_2(|w|)=|w|$. Then we have $\underline{\alpha} = 0.2$, $\overline{\alpha} = 0.25$, $\overline{\beta}_1=4$, $\overline{\gamma}_1 = 1$, $\overline{\gamma}_2=0.25$, and $\overline{\beta}_2 = 1$. We choose $\epsilon = 10 \overline{\beta}_1 \overline{\alpha} / \underline{\alpha}  $ so that $\lambda = 0.9\underline{\alpha} = 0.18 $ and choose $\epsilon_a=\epsilon_b=10$ so that $\lambda_a = \overline{\gamma}_1 + \overline{\gamma}_2/\epsilon_a + \overline{\beta}_2/\epsilon_b = 1.125$. 

{\bf No Resilient Control Strategy}:  The attack distribution is set as a truncated Gaussian with parameters $(0, 1, 0.1, 0.1)$. The  attack's impact without any resilient control measures is shown in Fig.\ref{fig:noresilient}. The result shows that  the attacker is  capable of diverting the CPS from its intended behavior under {\bf Assumptions~\ref{assumption2}} and {\bf \ref{assumption4}}.

\begin{figure}[ht]
    \centering
    \includegraphics[width=\linewidth]{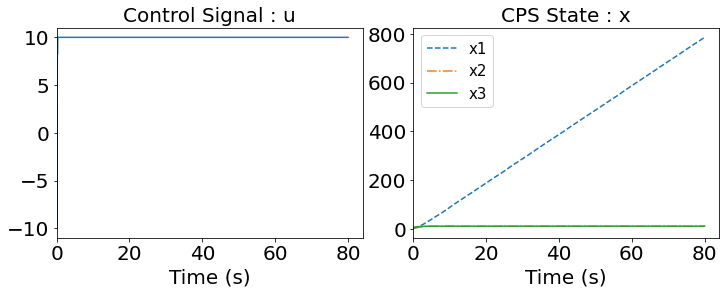}
    \caption{Attack's impact without any resilient control strategy.}
    \label{fig:noresilient}
\end{figure}

{\bf Alg.~\ref{alg:RRS2} without Anomaly Detection}: 
We initially consider the case with a single controller (i.e., the defense depicted in Fig.\ref{fig:a-d-model}). We set $T_0=1$s, yielding $\mathbb{E}_{t_a}[e^{-(\lambda+\lambda_a){t_a}+\lambda_a(T_0+t_r)}]\approx 5.5$. The resulting CPS states are not mean-square bounded, as illustrated in the first row of Fig.\ref{fig:basiclinear}. Nevertheless, when compared to Fig.~\ref{fig:noresilient}, this defense strategy causes the CPS states to deviate less from their intended values over the same time frame. When applying Alg.\ref{alg:RRS2} with $n=4$, the term $\mathbb{E}_{t_a}[e^{-(\lambda+\lambda_a){t_a}+\lambda_a \frac{t_r}{n-1}}]$ is approximately 1.03. These results are presented in the second row of Fig.\ref{fig:basiclinear}. Relative to the single-controller case ($n=1$), the impact of the attack is further mitigated. As $n$ increases to 11, the term becomes approximately 1. The CPS states are mean-square bounded. 

\begin{figure}[ht]
    \centering
    \includegraphics[width=\linewidth]{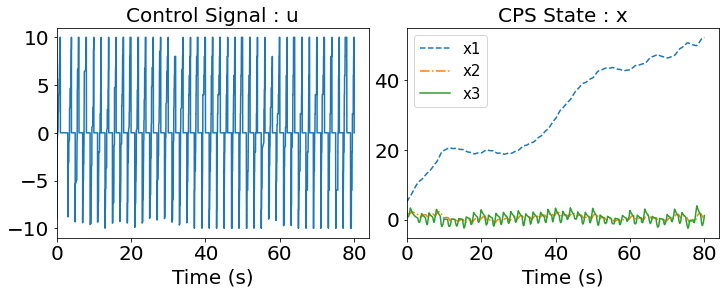}
    \includegraphics[width=\linewidth]{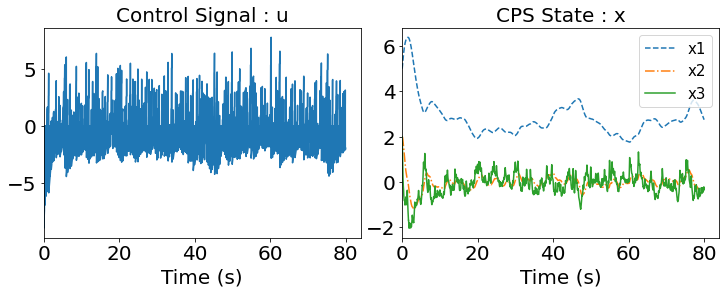}
    \includegraphics[width=\linewidth]{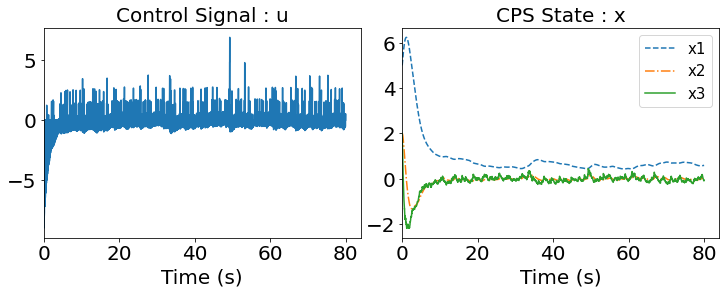}
    \caption{Performance of Alg.~\ref{alg:RRS2} in the absence of anomaly detection: The compromised control signals are set to $u=u_{max}=10$. All experiments are conducted ten times, and the average results are plotted. Therefore, the compromised control signal may show an average value of less than ten. The first row: $n=1$ with $T_0=1$. The middle row: $n=4$. The last row: $n=11$.}
    \label{fig:basiclinear}
\end{figure}

{\bf Alg.~\ref{alg:RRS2} with Anomaly Detection}: 
The detection distribution is a truncated Gaussian with parameters $(0, 1, 0.1, 1)$, resulting in $\mathbb{E}_{t_a,t_d}[e^{-\lambda t_a+\lambda_a(t_d+t_r+t_c)}]$ $ \approx 4.3$ for a single-controller case. The outcome is displayed in the first row of Fig.\ref{fig:anomalylinear}. Compared to the first row in Fig.\ref{fig:basiclinear}, the inclusion of the anomaly detector mitigates the attack's impact.  When $n=4$ controllers are utilized, the anomaly detector mitigates the attack's impact during each controller's working period, as can be seen by comparing  the middle rows of Fig.\ref{fig:anomalylinear} and Fig.\ref{fig:basiclinear}. When comparing the last rows of Fig.\ref{fig:anomalylinear} and Fig.~\ref{fig:basiclinear}, the advantage of employing the anomaly detector diminishes as the number of controllers ($n$) increases to a sufficient level (e.g., 11).  However, this shows that using sufficient controllers can mitigate the impact of stealthy attacks that bypass the anomaly detector.

\begin{figure}[ht]
    \centering
    \includegraphics[width=\linewidth]{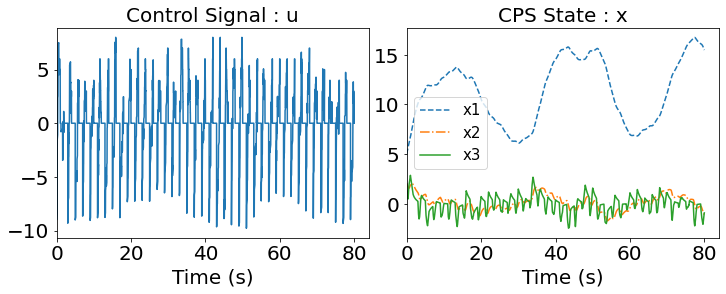}
    \includegraphics[width=\linewidth]{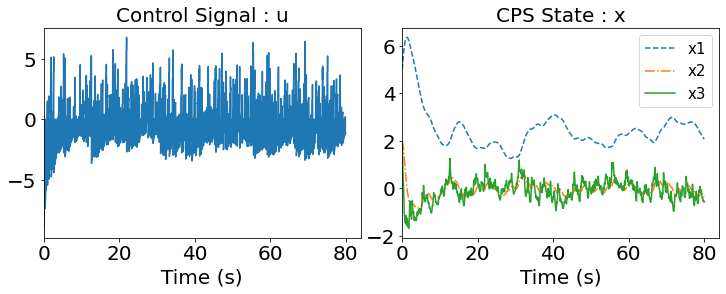}
    \includegraphics[width=\linewidth]{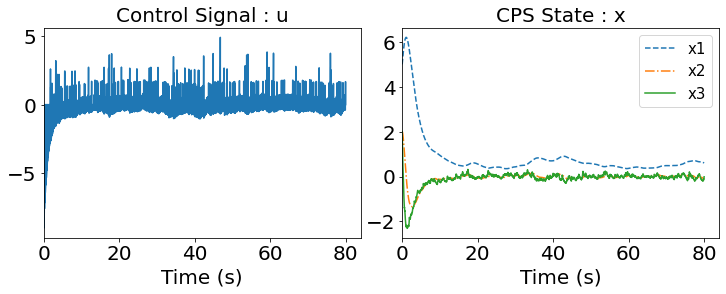}
    \caption{Performance of Alg.~\ref{alg:RRS2} with anomaly detection. For compromised control signal, $u=u_{max}=10$. We run all the experiments ten times and plot the average result. First row: $n=1$ and $T_0=1$. Middle row: $n=4$. Last row: $n=11$.}
    \label{fig:anomalylinear}
\end{figure}

\subsection{Example of A SMIB System}

Consider the SMIB power system \cite{WHMG93,JKF94} shown in Fig.~\ref{fig:bus} with the following dynamics:
\begin{figure}[ht]
    \centering
    \includegraphics[width=\linewidth]{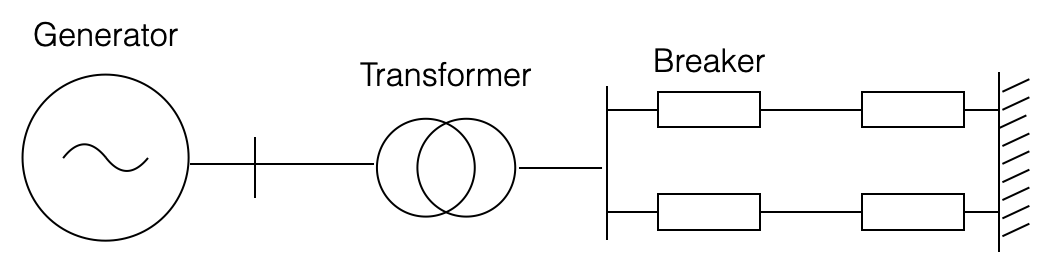}
    \caption{A SMIB power system: generator connected through transmission lines to an infinite bus.}
    \label{fig:bus}
\end{figure}
\begin{align}
\nonumber
    \dot \delta & = \omega; \qquad
    \dot \omega = \frac{1}{2H}[-D\omega + \omega_0(P_{m0} - P_e)]; \\
    \dot E_{q1} & = \frac{1}{T_{d0}}[E_f - E_q - \hat E_q]
    \label{SMIB}
\end{align} 
where $\delta$, $\omega$, and $E_{q1}$ are the system states representing the power angle, relative speed, and quadrature-axis transient electromotive force (EMF), respectively. $E_q$ is the quadrature-axis EMF, $\hat E_q$ is the disturbance, and $E_f$ is the equivalent EMF in the excitation coil, which is the control signal. $P_{m0}$ represents the mechanical input power, $\omega_0 = 2\pi f_0$ represents the synchronous machine speed, and $P_e$ represents the active electrical power delivered by the generator. $T_{d0}$ is the direct-axis transient short circuit time constant. The parameters $D$ and $H$ are the per unit (p.u.) damping constant and inertia constant, respectively. $P_e$, $E_q$, and $E_{q1}$ are related through the algebraic equations \cite{WHMG93} $ E_q  = \frac{x_{ds}}{x_{ds1}}E_{q1} - \frac{x_d - x_{d1}}{x_{ds1}}V_s \text{cos}\delta $ and $P_e = \frac{V_sE_q\text{sin}\delta}{x_{ds}}$ where $V_s$ is the infinite bus voltage, and $x_d$, $x_{d1}$, $x_{ds}$, and $x_{ds1}$ are reactance parameters as in \cite{WHMG93}. We design the control law $u=E_f$ based
on external feedback linearization, i.e., $v_f = K[\delta - \delta_0, \omega, P_e - P_{m0}]^T + P_{m0}$ and $E_f = \frac{1}{I_q} (v_f - \frac{x_d-x_{d1}}{x_{ds1}}T_{d01}I_qV_s\omega\text{sin}\delta  - \frac{V_sT_{d01}}{x_{ds}}E_q\omega\text{cos}\delta)$
where $I_q = \frac{V_s \text{sin}\delta}{x_{ds}}$ and $T_{d01} = \frac{x_{ds1}}{x_{ds}}T_{d0}$. $K$ is the $1\times 3$ gain vector and $\delta_0$ is the desired operating point for the power angle $\delta$. We set $P_{m0} = 0.9 $ p.u., $\omega_0 = 314.159$ rad/s, $T_{d0} = 6.9$ s, $D = 5$ p.u., $H =4 $ s, $V_s= 1.0$ p.u., $x_d=1.863$, $x_{d1}=0.257$, $x_{ds}=2.2327$, $x_{ds1}=0.6267$,  $K = [19.3, 6.43, -47.6]^T$, $u_{max}=2.3$ p.u., and $\delta_0 = 1.309$ rad. We allow the attacker to target any subset of controllers simultaneously. The  compromised signal is $u_{max}$.  The  initial  condition  is $[\delta, \omega, E_{q1}]^T = [1, 1, 1]^T$. We set  $\hat E_q = 0.01 \text{cos} \frac{\pi t}{2}$, $t_r = 0.35$s, and  $t_c=0.05$s. 

\begin{figure}[ht]
    \centering
    \includegraphics[width=\linewidth]{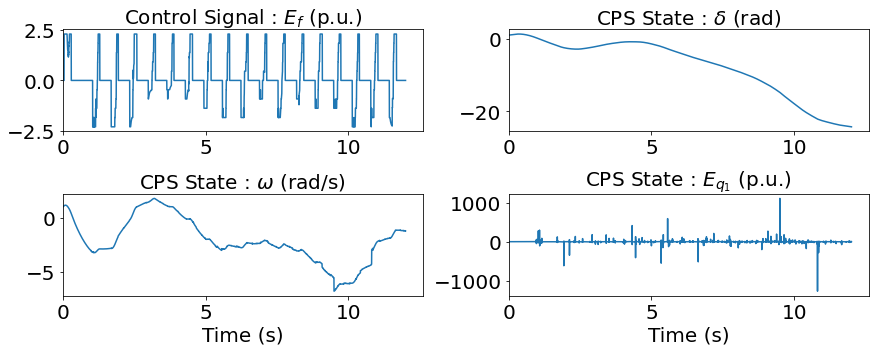}
    \includegraphics[width=\linewidth]{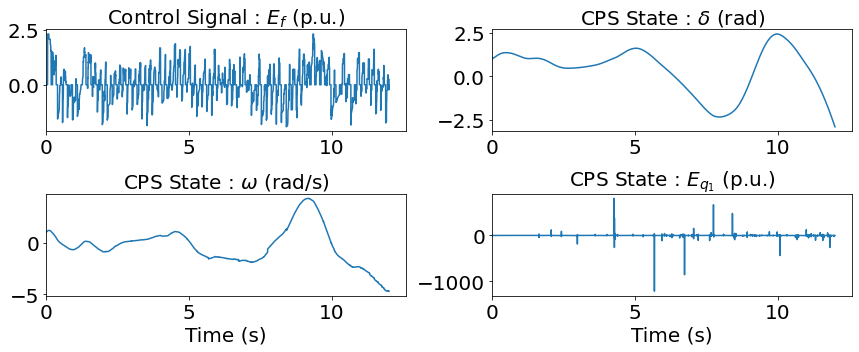}
    \includegraphics[width=\linewidth]{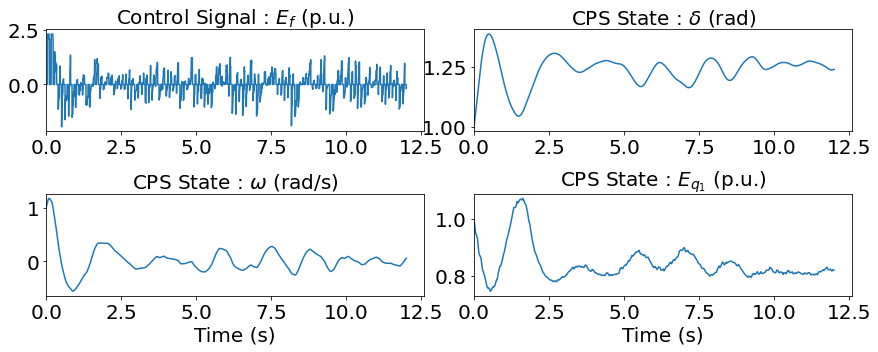}
    \includegraphics[width=\linewidth]{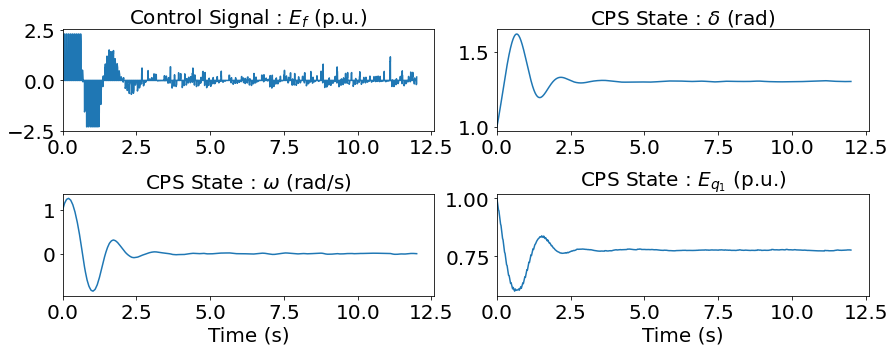}
    \caption{The performance of Alg.~\ref{alg:RRS2}. The attacker alters the control signal to $u=u_{max}=2.3 $ $p.u.$. All experiments are conducted ten times, and the average results are plotted. Therefore, the compromised control signal may show an average value of less than $u_{max}$. First row: $n=1$ and $T_0=0.3$s. Second row: $n=3$. Third row: $n=4$. Last row: $n=6$.}
    \label{fig:nonlinearcase}
\end{figure}

We set both the attack and detection distributions as $(0, 0.2, 0.15, 1)$.  The performance of the CPS under Alg.\ref{alg:RRS2} with different numbers of controllers is presented in Fig.\ref{fig:nonlinearcase}. Our Alg.\ref{alg:RRS2} not only mean-square bounds the CPS states but also achieves mean-square stabilization (i.e., $\lim_{t\to\infty}\mathbb{E}[V(t)] = 0$) when $n$ increases to 6, which is a stronger condition than the mean-square boundedness. The explanation is that as the number of utilized controllers increases, the period during which the controller is exposed to attacks decreases. The working controller is likely to be replaced before being compromised. This example demonstrates the efficacy and potential of Alg.\ref{alg:RRS2} in practical applications.

\section{Conclusion}
\label{sec:conclusion}
A multi-controller switching strategy integrated with anomaly detection to identify attacks and periodic re-initialization to remove attacks is proposed for enhancing the CPS resiliency.  An integrated attack-defense model is proposed for the performance analysis of our approach. Three sufficient conditions for   the mean-square boundedness of the CPS states are derived. Our switching strategy is  validated on two different systems. Future research directions include broadening the types of dynamic systems that can be addressed and relaxing the assumptions required for the current model.

\bibliographystyle{plain}
\bibliography{ref}

\begin{thebibliography}{10}

\bibitem{ACHLMC18}
Fardin Abdi, Chien-Ying Chen, Monowar Hasan, Songran Liu, Sibin Mohan, and Marco Caccamo.
\newblock Guaranteed physical security with restart-based design for cyber-physical systems.
\newblock In {\em 2018 ACM/IEEE 9th International Conference on Cyber-Physical Systems}, pages 10--21, 2018.

\bibitem{A21}
Qasem~Abu Al-Haija.
\newblock On the security of cyber-physical systems against stochastic cyber-attacks models.
\newblock In {\em Proceedings of the IEEE International IOT, Electronics and Mechatronics Conference}, pages 1--6, Toronto, Canada, 2021.

\bibitem{ANR16}
Cesare Alippi, Stavros Ntalampiras, and Manuel Roveri.
\newblock Model-free fault detection and isolation in large-scale cyber-physical systems.
\newblock {\em IEEE Transactions on Emerging Topics in Computational Intelligence}, 1(1):61--71, 2016.

\bibitem{AESE21}
Abdulwahab Almutairi, H~El-Metwally, MA~Sohaly, and IM~Elbaz.
\newblock Lyapunov stability analysis for nonlinear delay systems under random effects and stochastic perturbations with applications in finance and ecology.
\newblock {\em Advances in Difference Equations}, 2021(1):1--32, 2021.

\bibitem{AZKYS19}
Miguel~A Arroyo, M~Tarek~Ibn Ziad, Hidenori Kobayashi, Junfeng Yang, and Simha Sethumadhavan.
\newblock Yolo: frequently resetting cyber-physical systems for security.
\newblock In {\em Autonomous Systems: Sensors, Processing, and Security for Vehicles and Infrastructure 2019}, volume 11009, pages 166--183, 2019.

\bibitem{B43}
Richard Bellman.
\newblock The stability of solutions of linear differential equations.
\newblock {\em Duke Mathematical Journal}, 10(4):643--647, 1943.

\bibitem{GSYGA04}
George Candea, Shinichi Kawamoto, Yuichi Fujiki, Greg Friedman, and Armando Fox.
\newblock {Microreboot{\textemdash}A} technique for cheap recovery.
\newblock In {\em Proceedings of Symposium on Operating Systems Design \& Implementation}, 2004.

\bibitem{CALHHS11}
Alvaro~A C{\'a}rdenas, Saurabh Amin, Zong-Syun Lin, Yu-Lun Huang, Chi-Yen Huang, and Shankar Sastry.
\newblock Attacks against process control systems: risk assessment, detection, and response.
\newblock In {\em Proceedings of the 6th ACM Symposium on Information, Computer and Communications Security}, pages 355--366, 2011.

\bibitem{ENK11}
David Evans, Anh Nguyen-Tuong, and John Knight.
\newblock Effectiveness of moving target defenses.
\newblock {\em Moving Target Defense: Creating Asymmetric Uncertainty for Cyber Threats}, pages 29--48, 2011.

\bibitem{GUCVFRTSC18}
Jairo Giraldo, David Urbina, Alvaro Cardenas, Junia Valente, Mustafa Faisal, Justin Ruths, Nils~Ole Tippenhauer, Henrik Sandberg, and Richard Candell.
\newblock A survey of physics-based attack detection in cyber-physical systems.
\newblock {\em ACM Computing Surveys}, 51(4):1--36, 2018.

\bibitem{G19}
Thomas~Hakon Gronwall.
\newblock Note on the derivatives with respect to a parameter of the solutions of a system of differential equations.
\newblock {\em Annals of Mathematics}, 20(4):292--296, 1919.

\bibitem{HO18}
Farshad Harirchi and Necmiye Ozay.
\newblock Guaranteed model-based fault detection in cyber--physical systems: A model invalidation approach.
\newblock {\em Automatica}, 93:476--488, 2018.

\bibitem{HMS03}
Desmond~J Higham, Xuerong Mao, and Andrew~M Stuart.
\newblock Exponential mean-square stability of numerical solutions to stochastic differential equations.
\newblock {\em LMS Journal of Computation and Mathematics}, 6:297--313, 2003.

\bibitem{JKF94}
Sandeep Jain, Farshad Khorrami, and B~Fardanesh.
\newblock Adaptive nonlinear excitation control of power systems with unknown interconnections.
\newblock {\em IEEE Transactions on Control Systems Technology}, 2(4):436--446, 1994.

\bibitem{KSCKMK16}
Anastasis Keliris, Hossein Salehghaffari, Brian Cairl, Prashanth Krishnamurthy, Michail Maniatakos, and Farshad Khorrami.
\newblock Machine learning-based defense against process-aware attacks on industrial control systems.
\newblock In {\em Proceedings of IEEE International Test Conference}, pages 1--10, 2016.

\bibitem{KKK16}
Farshad Khorrami, Prashanth Krishnamurthy, and Ramesh Karri.
\newblock Cybersecurity for control systems: a process-aware perspective.
\newblock {\em IEEE Design \& Test}, 33(5):75--83, 2016.

\bibitem{KM01}
Ilya Kolmanovsky and Tatiana~L Maizenberg.
\newblock Mean-square stability of nonlinear systems with time-varying, random delay.
\newblock {\em Stochastic Analysis and Applications}, 19(2):279--293, 2001.

\bibitem{KXWSL18}
Fanxin Kong, Meng Xu, James Weimer, Oleg Sokolsky, and Insup Lee.
\newblock Cyber-physical system checkpointing and recovery.
\newblock In {\em Proceedings of International Conference on Cyber-Physical Systems}, pages 22--31, 2018.

\bibitem{KK20}
Prashanth Krishnamurthy and Farshad Khorrami.
\newblock Adaptive randomized controller switching for resilient cyber-physical systems.
\newblock In {\em Proceedings of the IEEE Conference on Control Technology and Applications}, pages 738--743, Montreal, Canada, 2020.

\bibitem{KK21}
Prashanth Krishnamurthy and Farshad Khorrami.
\newblock Resilient redundancy-based control of cyber-physical systems through adaptive randomized switching.
\newblock {\em Systems \& Control Letters}, 158:105066, 2021.

\bibitem{KKKPS18}
Prashanth Krishnamurthy, Farshad Khorrami, Ramesh Karri, David Paul-Pena, and Hossein Salehghaffari.
\newblock Process-aware covert channels using physical instrumentation in cyber-physical systems.
\newblock {\em IEEE Transactions on Information Forensics and Security}, 13(11):2761--2771, 2018.

\bibitem{LHBF14}
Per Larsen, Andrei Homescu, Stefan Brunthaler, and Michael Franz.
\newblock Sok: Automated software diversity.
\newblock In {\em Proceedings of IEEE Symposium on Security and Privacy}, pages 276--291, 2014.

\bibitem{LHKK18}
Nandi~O Leslie, Richard~E Harang, Lawrence~P Knachel, and Alexander Kott.
\newblock Statistical models for the number of successful cyber intrusions.
\newblock {\em The Journal of Defense Modeling and Simulation}, 15(1):49--63, 2018.

\bibitem{MCMK19}
J~Sukarno Mertoguno, Ryan~M Craven, Matthew~S Mickelson, and David~P Koller.
\newblock A physics-based strategy for cyber resilience of cps.
\newblock In {\em Autonomous Systems: Sensors, Processing, and Security for Vehicles and Infrastructure}, volume 11009, pages 79--90, 2019.

\bibitem{MCS13}
Yilin Mo, Rohan Chabukswar, and Bruno Sinopoli.
\newblock Detecting integrity attacks on scada systems.
\newblock {\em IEEE Transactions on Control Systems Technology}, 22(4):1396--1407, 2013.

\bibitem{PMS10}
Sujit~S Phatak, DJ~McCune, and George Saikalis.
\newblock Cyber physical system: A virtual cpu-based mechatronic simulation.
\newblock {\em IFAC Proceedings Volumes}, 43(18):405--410, 2010.

\bibitem{RLNW16}
James~F Riordan, Richard~P Lippmann, Sebastian~J Neumayer, and Neal Wagner.
\newblock A model of network porosity.
\newblock Technical report, MIT Lincoln Laboratory Lexington United States, 2016.

\bibitem{SSTC09}
Zhen Song, Chellury~Ram Sastry, Nazif~Cihan Tas, and YangQuan Chen.
\newblock Feasibility analysis on optimal sensor selection in cyber-physical systems.
\newblock In {\em Proceedings of the American Control Conference}, pages 5368--5373, St. Louis, MO, 2009.

\bibitem{TS12}
Angel Tocino and MJ~Senosiain.
\newblock Mean-square stability analysis of numerical schemes for stochastic differential systems.
\newblock {\em Journal of Computational and Applied Mathematics}, 236(10):2660--2672, 2012.

\bibitem{WHMG93}
Youyi Wang, David~J Hill, Richard~H Middleton, and Long Gao.
\newblock Transient stability enhancement and voltage regulation of power systems.
\newblock {\em IEEE Transactions on Power Systems}, 8(2):620--627, 1993.

\bibitem{WK12}
Jin Wei and Deepa Kundur.
\newblock A flocking-based model for dos-resilient communication routing in smart grid.
\newblock In {\em Proceedings of the IEEE Global Communications Conference}, pages 3519--3524, Anaheim, CA, 2012.

\bibitem{WYPSLW21}
Chengwei Wu, Weiran Yao, Wei Pan, Guanghui Sun, Jianxing Liu, and Ligang Wu.
\newblock Secure control for cyber-physical systems under malicious attacks.
\newblock {\em IEEE Transactions on Control of Network Systems}, 9(2):775--788, 2021.

\bibitem{XH19}
Maochao Xu and Lei Hua.
\newblock Cybersecurity insurance: Modeling and pricing.
\newblock {\em North American Actuarial Journal}, 23(2):220--249, 2019.

\bibitem{ZV21}
Lijing Zhai and Kyriakos~G Vamvoudakis.
\newblock Data-based and secure switched cyber-physical systems.
\newblock {\em Systems \& Control Letters}, 148:104826, 2021.

\bibitem{ZCKC20}
Lin Zhang, Xin Chen, Fanxin Kong, and Alvaro~A Cardenas.
\newblock Real-time attack-recovery for cyber-physical systems using linear approximations.
\newblock In {\em Proceedings of IEEE Real-Time Systems Symposium (RTSS)}, pages 205--217, 2020.

\bibitem{ZDO14}
Rui Zhuang, Scott~A DeLoach, and Xinming Ou.
\newblock Towards a theory of moving target defense.
\newblock In {\em Proceedings of the ACM Workshop on Moving Target Defense}, pages 31--40, 2014.

\end{thebibliography}

\begin{wrapfigure}{l}{0.8in}
\includegraphics[width=1in,height=1.25in,clip,keepaspectratio]{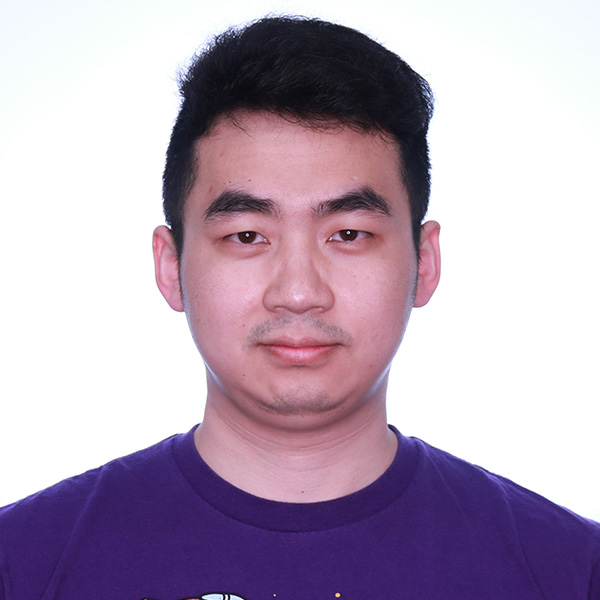} 
\end{wrapfigure}

\textbf{Hao Fu} was born in November 5, 1994, Anyang, China. He received his  Ph.D. in Department of Electrical and Computer Engineering in New York University, Tandon School of Engineering, Brooklyn, NY, USA, 2024. In 2019, he received his Master of Science degree in Electrical Engineering in the same department as well. He received his Bachelor of Science degree in Physics from University of Science and Technology of China, Hefei, China, 2017. His major field of study contains machine learning, finance, and control theory. From 2017 to 2018, he was a research assistant in NYU Wireless lab. In 2018 Fall, he joined in Control/Robotics Research Laboratory (CRRL). Previously, he was studying the possibility of using machine learning tools to develop economical navigation algorithms. Additionally, he was also studying the possibility of using neural networks to assist decision-making in finance. Currently, he is studying backdooring attacks against neural networks and security problems in cyber-physical systems. Dr. Fu published articles in many journals, including IEEE Transactions on Information Forensics and Security, IEEE Transactions on Dependable and Secure Computing, and IEEE Access.

\begin{wrapfigure}{l}{0.8in}
\includegraphics[width=1in,height=1.25in,clip,keepaspectratio]{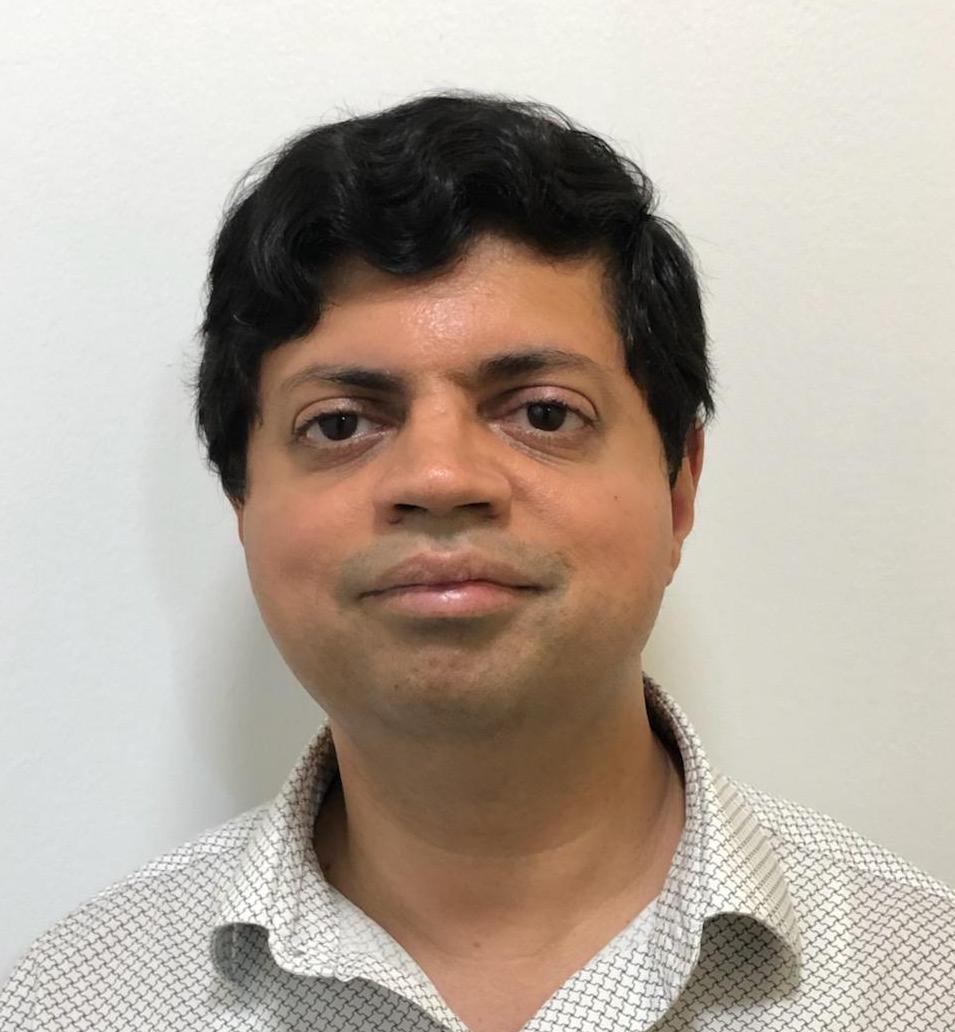} 
\end{wrapfigure}

\textbf{Prashanth Krishnamurthy}
received his B.Tech. degree in electrical engineering from Indian Institute of Technology, Chennai, India in 1999, and M.S. and Ph.D. degrees in electrical engineering from Polytechnic University (now NYU) in 2002 and 2006, respectively. He is a Research Scientist and Adjunct Faculty with the Department of Electrical and Computer Engineering at NYU Tandon School of Engineering. He has co-authored over 175 journal and conference papers and a book. He has also co-authored the book ``Modeling and Adaptive Nonlinear Control of Electric Motors'' published by Springer Verlag in 2003. His research interests include autonomous vehicles and robotic systems, multi-agent systems, sensor data fusion, robust adaptive nonlinear control, resilient control,  path planning and obstacle avoidance, machine learning, real-time embedded systems, cyber-physical systems and cyber-security, real-time software implementations, and decentralized and large-scale systems.   

\begin{wrapfigure}{l}{0.8in}
\includegraphics[width=1in,height=1.25in,clip,keepaspectratio]{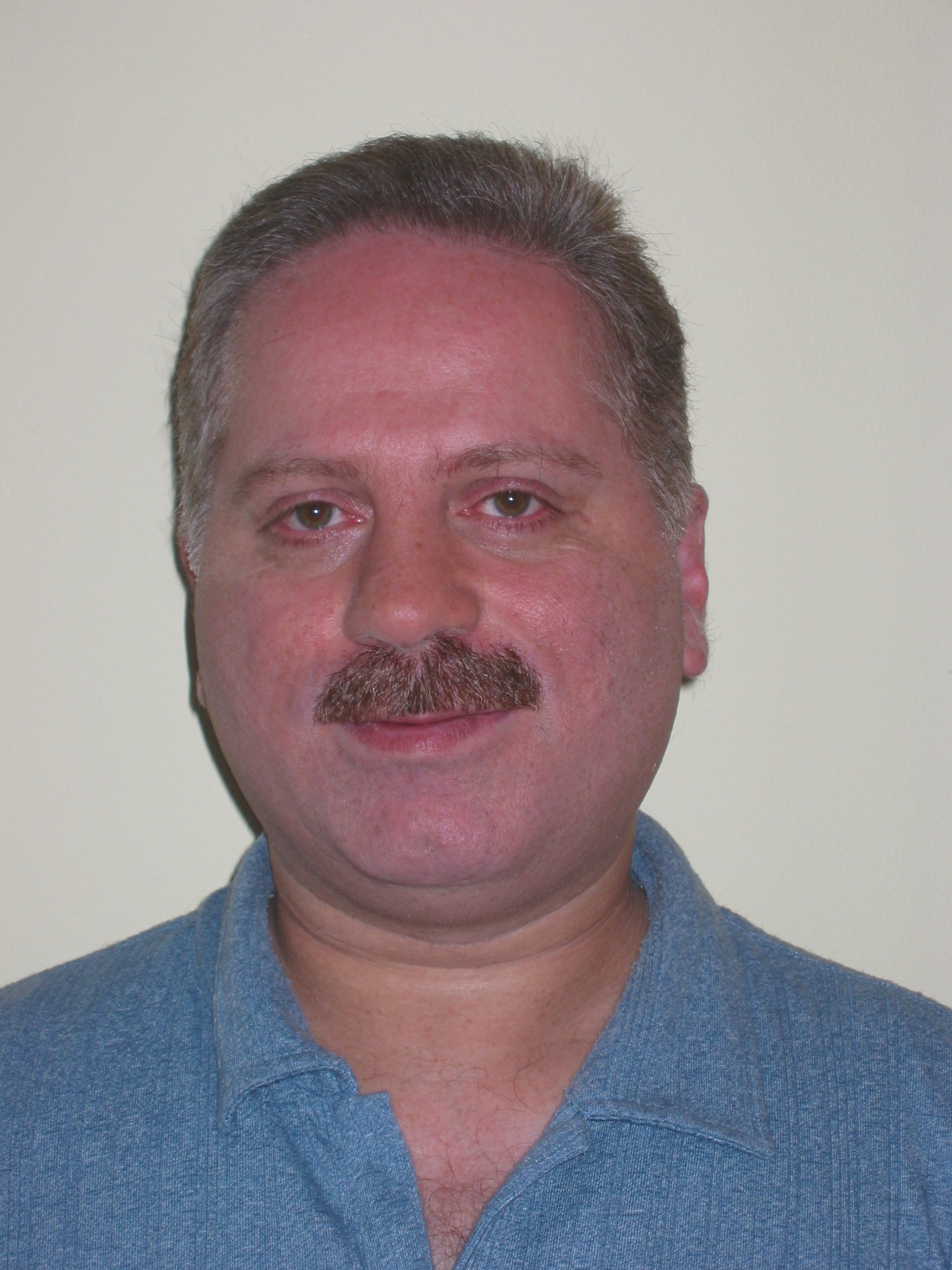} 
\end{wrapfigure}

\textbf{Farshad Khorrami} received the Bachelors degrees in mathematics and electrical engineering in 1982 and 1984 respectively from The Ohio State University. He also received the Master’s degree in mathematics and Ph.D. in electrical engineering in 1984 and 1988 from The Ohio State University, Columbus, Ohio, USA. He is currently a professor of Electrical and Computer Engineering Department at NYU, Brooklyn, NY where he joined as an assistant professor in Sept. 1988. His research interests include adaptive and nonlinear controls, robotics and automation, unmanned vehicles, cyber security for CPS, embedded systems security, machine learning, and large-scale systems and decentralized control. He has published over 360 refereed journal and conference papers in these areas. His book ``Modeling and Adaptive Nonlinear Control of Electric Motors'' was published by Springer  Verlag in 2003.  He also has fifteen U.S. patents on novel smart micro-positioners, control systems, cyber security, and wireless sensors and actuators. He has  developed and directed the Control/Robotics  Research  Laboratory at Polytechnic University (Now NYU) and the Co-Director of the Center in AI and Robotics (CAIR) at NYU Abu Dhabi. Dr. Khorrami has  also commercialized UAVs as well as development of  auto-pilots for various unmanned vehicles. His research has been supported by the ARO, NSF, ONR, DARPA, DOE,  AFRL, NASA, and several corporations. He has served as conference organizing committee member of several international conferences.

\end{document}